\newcommand{\be}{\begin{equation}}
\newcommand{\en}{\end{equation}}
\newcommand{\bea}{\begin{eqnarray}}
\newcommand{\ena}{\end{eqnarray}}
\newcommand{\beano}{\begin{eqnarray*}}
\newcommand{\enano}{\end{eqnarray*}}
\newcommand{\bee}{\begin{enumerate}}
\newcommand{\ene}{\end{enumerate}}
\newcommand{\mb}{\mathbb}
\newcommand{\R}{\mathbb{R}}
\newcommand{\mc}{\mathcal}
\newcommand{\D}{{\mc D}}
\newcommand{\Sc}{{\cal S}}
\newcommand{\F}{{\cal F}}
\newcommand{\G}{{\cal G}}
\newcommand{\Oc}{{\cal O}}
\newcommand{\Lc}{{\cal L}}
\newcommand{\U}{{\mathcal U}}
\newcommand{\V}{{\mathcal V}}
\newcommand{\1}{1 \!\! 1}
\newcommand{\ha}{\hat a}
\newcommand{\hb}{\hat b}
\newcommand{\Hil}{\mc H}
\newtheorem{thm}{Theorem}
\newtheorem{prop}[thm]{Proposition}
\newtheorem{defn}[thm]{Definition}
\newenvironment{proof}{\noindent {\bf Proof --}}{\hfill$\square$ \vspace{3mm}\endtrivlist}
\newcommand{\SC}{\mathcal{S}({\mb R})}
\newcommand{\Cv}{\frac{1}{\cos(2\nu)}}
\begin{document}
\thispagestyle{empty}

\vspace*{2cm}

\begin{center}
{\Large \bf {Bi-squeezed states arising from pseudo-bosons}}   \vspace{2cm}\\

{\large F. Bagarello, F. Gargano, S. Spagnolo}\\
  Dipartimento di Energia, Ingegneria dell'Informazione e Modelli Matematici,\\
Scuola Politecnica, Universit\`a di Palermo,\\ I-90128  Palermo, Italy\\

\end{center}

\begin{abstract}
\noindent Extending our previous analysis on bi-coherent states,  we introduce here a new class of quantum mechanical vectors, the \emph{bi-squeezed states}, and we deduce  their main mathematical properties. We relate bi-squeezed states to the so-called regular and non regular pseudo-bosons. We  show that these two cases are  different, from a mathematical point of view. Some physical examples are considered.
\end{abstract}
\section{Introduction}
In last decades, the exigency to describe rigorously decaying quantum systems or systems going irreversibly towards a state of equilibrium has stimulated the research on quantum systems whose time evolution is ruled by non-hermitian Hamiltonians, see \cite{effH1,effH2,effH3} and references therein.
The introduction of this class of operators was useful to describe phenomenologically some kind of physical systems, neglecting the well known contradictions that the use of these Hamiltonians involve.
However, in the past twenty years, literature has increasingly focused its attention on  the possibility of having, in realistic situations and under specific conditions, Hamiltonians not necessarily hermitian but whose eigenvalues are real,  \cite{BB98}-\cite{BGV15}.  This is related to some symmetry conditions, physically motivated, that, like hermiticity, are again sufficient to guarantee reality of the spectrum, and possibly an unitary time evolution of the system, see \cite{bagdin1,bagdin2,bagdin3} and references therein.

This line of research has produced several results in quantum open systems, in quantum optics, in gain-loss systems and in other fields of quantum mechanics. Many application can be found in \cite{bagproc}.

From a mathematical point of view, losing Hermiticity of the Hamiltonian implies that the orthonormal (o.n) basis of its eigenvectors  must be replaced by two sets of biorthogonal states, no longer necessarily bases, \cite{baginbagbook}, but still complete.
In this context, in recent years, one of the authors, F.B., has considered in details some extended versions of the canonical (anti)-commutation relations, $[a,b]=\1$ (or $\{a,b\}=\1$), in which $a$ is not $b^{\dagger}$, and he has deduced several properties of the extended number-like operators, $N=ba$ and $N^{\dagger} (\neq N)$. In this approach, the definition of intertwining operators mapping the eigenstates of $N$ into those of $N^{\dagger}$ has been carried out in details, mainly considering the mathematical subtleties occurring when they happen to be unbounded. The related second-quantized framework produces the so-called pseudo-bosons and pseudo-fermions, \cite{baginbagbook}, or a nonlinear version of the first,  { \cite{BNLPbs,BZn1,BZn2}}. In connection with pseudo-bosons, the notion of bi-coherent states (BCS), originally introduced in \cite{tri} and then analysed, from a more mathematically oriented perspective, in  \cite{bagpb1}, has been considered in many of its aspects, see also \cite{BagQUO,bialo2017}. BCS can be considered as a non-hermitian generalization of coherent states, a class of quantum states playing a fundamental role both from a theoretical and an experimental point of view, \cite{gazeaubook, didier,aag}.
In this paper we  generalize a somehow related class of states, introducing the {\em bi-squeezed states} (BSS). They can be considered as a suitable extension of squeezed states, introduced originally in quantum mechanics in order to describe non-linear  processes such as optical parametric oscillations and four-wave mixing (see for example \cite{BarRad,ScullyZub,WallsMil} and references therein). As for the BCS, our main aim is to generalize  squeezed states to the context of non-hermitian quantum mechanics, similarly to what is done in \cite{tri2,maleki}, and to deduce  their main mathematical properties. 

{ The paper is organized as follows. In Section \ref{sec::prel}  we briefly review the relevant theory of BCS, and we discuss relations of these latter with a pseudo-bosonic structure. Section \ref{BSS} is dedicated to the definition of the deformed squeezing operators and of the BSS. In particular,  we consider both the case of regular and non regular BSS, for which there is no guarantee that squeezing operators are bounded. An application to the  Swanson model is described.
 Section \ref{sect::example} is devoted to an application of BSS in a quantum mechanical model ruled by a non hermitian Hamiltonian. Our conclusions are given in  Section \ref{sect::concl}.}

\section{Preliminaries}
\label{sec::prel}
To keep the paper self-contained, in this section we briefly review  the main features of pseudo-bosons and of BCS, putting in evidence those aspects which are particularly relevant for us.

\subsection{Some facts on $\mathcal D-$pbs}
\label{subsec:dpbs}
Let $\Hil$ be a given Hilbert space with scalar product $\left<.,.\right>$ and related norm $\|.\|$. Let  $\hat a$ and $\hat b$ be two operators
on $\Hil$, with domains $D(\hat a)$ and $D(\hat b)$ respectively, $\hat a^\dagger$ and $\hat b^\dagger$ their adjoint, and let $\D$ be a dense subspace of $\Hil$
such that $\hat a^\sharp\,\D\subseteq\D$ and $\hat b^\sharp\,\D\subseteq\D$, where $x^\sharp$ is $x$ or $x^\dagger$. Of course, $\D\subseteq D(\hat a^\sharp)\cap D(\hat b^\sharp)$.

\begin{defn}\label{def21}
The operators $(\hat a,\hat b)$ are $\D$-pseudo bosonic  if, for all $f\in\D$, we have
\be
[\hat a,\hat b]f=\hat a\,\hat b\,f-\hat b\,\hat a\,f=f,\quad \forall f\in \D.
\label{21}\en
\end{defn}
We  suppose that there exist two non-zero vectors $\varphi_{ 0}, \Psi_{ 0}\in\D$ such that
\bea\hat a\,\varphi_{ 0}=0,\quad \hat b^\dagger\Psi_{ 0}=0.\label{groundstate}\ena
It is clear that $\varphi_0\in D^\infty(\hat b):=\cap_{k\geq0}D(\hat b^k)$ and that $\Psi_0\in D^\infty(\hat a^\dagger)$, so
that we can define in $\D$ the  vectors
\be \varphi_n:=\frac{1}{\sqrt{n!}}\,{\hat b}^n\varphi_0,\qquad \Psi_n:=\frac{1}{\sqrt{n!}}\,({\hat a}^\dagger)^n\Psi_0, \label{22}\en
$n\geq0$, and the related sets   $\F_\Psi=\{\Psi_{ n}, \,n\geq0\}$,
$\F_\varphi=\{\varphi_{ n}, \,n\geq0\}$.
Since each $\varphi_n, \Psi_n\in \D ,\forall n\geq0$, they also belong to the domains of $a^\sharp$ and $b^\sharp$. Then we can deduce the following ladder relations:
 \begin{eqnarray}
\hat b\,\varphi_n&=&\sqrt{n+1}\varphi_{n+1},\quad  n\geq 0,\label{20} \\
\hat  a\varphi_n&=&\sqrt{n}\,\varphi_{n-1},\quad  n\geq 1,\label{201} \\
\hat a^\dagger\Psi_n&=&\sqrt{n+1}\Psi_{n+1},\quad n\geq 0,\label{202} \\
\hat b^\dagger\Psi_n&=&\sqrt{n}\,\Psi_{n-1},\quad  n\geq 1,
 \label{23}
  \end{eqnarray}
as well as the eigenvalue equations $\hat N\varphi_{ n}=n\varphi_{n}$ and $\hat N^\dagger\Psi_{ n}=n\Psi_{ n}$, where $N:=\hb\ha$ is the pseudo bosonic number operator. These imply, in particular, that, if we choose the normalization of $\varphi_0$ and $\Psi_0$ in such a way $\left<\varphi_0,\Psi_0\right>=1$, we get
\be \left<\varphi_n,\Psi_m\right>=\delta_{n,m}, \label{39}\en
 for all $n, m\geq0$, so that $\F_\varphi$ and $\F_\Psi$ are biorthogonal sets.

In concrete applications $\D-$pseudo-bosons (or, simply, pseudo-bosons) arise as \textit{deformations} of the standard bosonic operators, in the sense that there exists a non-unitary, but invertible, operator
$T$, not necessarily bounded, such that
\bea\varphi_n=Te_n,\quad \Psi_n=(T^{-1})^\dagger e_n, \quad n\geq0.\label{DPBS}\ena
Here $\F_e=\{e_n\in\D,n\geq0\}$ is an o.n. basis of $\Hil$.  If \eqref{DPBS} holds, the pseudo-bosonic operators are connected to standard bosonic annihilation and creation operators by the following similarity maps:
$$
\ha f=T\ha_0T^{-1}f,\quad \hb f=T\ha_0^\dag T^{-1}f,\quad \forall f\in \D,
$$
where $[\ha_0,\ha_0^\dagger]=\1$. In this case $\F_e$ is the usual o.n. basis connected with $\ha_0$ and its adjoint: $\ha_0 e_0=0$ and $e_n=\frac{1}{\sqrt{n!}}(\ha_0^\dagger)^ne_0$, $n\geq1$.
The mathematical treatment is simplified if $\D$ is left  invariant by  $T$ and $T^{-1}$, and by their adjoints.

If $T,T^{-1}$ are both bounded we get what has been called  \textit{regular} $\D-$pseudo-bosons, and
 $\F_\varphi,\F_\Psi$ are biorthogonal Riesz bases. If $T$ or $T^{-1}$ are not bounded, then pseudo-bosons are {\em non regular}, 
and  $\F_\varphi$ and $\F_\Psi$ are no longer biorthogonal Riesz bases. Sometimes, they are not even bases, but just complete sets. However, quite often
we can check  that there exists   a suitable dense subspace $\G$ of $\Hil$ such that, for all $f, g\in \G$, the following holds:
\be
\left<f,g\right>=\sum_{n\geq0}\left<f,\varphi_n\right>\left<\Psi_n,g\right>=\sum_{n\geq0}\left<f,\Psi_n\right>\left<\varphi_n,g\right>,
\label{25}
\en
which can be seen as a sort of resolution of the identity restricted to $\G$.
When (\ref{25}) is satisfied, $\F_{\varphi}$ and $\F_{\Psi}$ are called $\G-$quasi bases, \cite{baginbagbook}.

\subsection{Bi-coherent states}
\label{subsub:BCS}

From now on, also in view of our specific interest, we take $\Hil=\Lc^2(\Bbb R)$. Hence the vectors $\varphi_{ n}$, $\Psi_n$ and $e_n$  depend on a (spatial) variable $x$.
It is well known that the standard annihilation operator   $\hat a_0$ satisfying, as above, the canonical commutation relation $[\hat a_0,\hat a_0^\dagger]=\1$, admits a set of eigenstates $\Phi_z(x)$ labeled by a complex variable $z$. These eigenstates are called {\em coherent states} and they can be obtained through the action on the vacuum of $\hat a_0$, $e_0(x)$ ($\hat a_0\,e_0(x)=0$), of the unitary displacement operator
\be
W(z)=e^{z\hat a_0^\dagger-\overline{z}\,\hat a_0}=\sum_{k\geq0}\frac{1}{k!}\left(z\hat a_0^\dag -\overline{z}\hat a_0\right)^k,
\label{stUnitaryW}
\en
where the sums is convergent in $\D$, as follows:
\be
\Phi_z(x)=W(z)e_0(x)=e^{-|z|^2/2}\sum_{k=0}^\infty \frac{z^k}{\sqrt{k!}}\,e_k(x),\quad x\in \R.
\label{31}\en
It is known that 
\be
\hat a_0\,\Phi_z(x)=z\,\Phi_z(x),\qquad\mbox{and}\qquad \frac{1}{\pi}\int_{\Bbb C}d^2z|\Phi_z(x)\left>\right<\Phi_z(x)|=\1.
\label{add1c}\en
It is also well known that $\Phi_z(x)$ saturates the Heisenberg uncertainty relation.
In \cite{bagpb1, BagQUO,bialo2017} an  extension of coherent states was proposed and analyzed in a non-hermitian context. In particular the following states have been considered:
 \bea
 \varphi_z(x)&=&e^{-\frac{|z|^2}{2}}\sum_{n\geq0}\frac{z^n }{\sqrt{n!}}\varphi_{n}(x),\label{vphiBCS}\\ \Psi_z(x)&=&e^{-\frac{|z|^2}{2}}\sum_{n\geq0}\frac{z^n}{\sqrt{n!}}\Psi_{n}(x)\label{vpsiBCS}.
 \ena
 where $\varphi_n(x)$ and $\Psi_n(x)$ are the vectors of $\F_\varphi,\F_\Psi$, see (\ref{22}). It was shown that, under suitable conditions,
 \eqref{vphiBCS} and \eqref{vpsiBCS} are eigenstates of the pseudo-bosonic lowering operators $\ha$ and $\hb^\dagger$:
$$
\ha \varphi_z(x)=z\varphi_z(x),\qquad \hb^\dagger\Psi_z(x)=z\Psi_z(x),
$$
and satisfy the resolution of the identity
$$\frac{1}{\pi}\int_{\Bbb C}d^2z\left<f,\varphi_z\right>\left<\Phi_z,g\right>=\langle f,g \rangle,$$
for all $f,g\in\D$ if $\F_\varphi$ and $\F_\Psi$ are $\D$-quasi bases, or for all $f,g\in\Hil$ if $\F_\varphi$ and $\F_\Psi$ are (Riesz) bases.
It is then clear that $\varphi_z(x)$ and $\Psi_z(x)$ satisfy an extended version of the properties in \eqref{add1c} for ordinary coherent states. It is further possible to show that they saturate some deformed version of the  Heisenberg uncertainty relation. We will discuss this aspect later.

The states $\varphi_z(x)$ and $\Psi_z(x)$ can also be deduced via the action of two  displacement-like operators acting on the
vacua $\varphi_0(x)$ and $\Psi_0(x)$. To show this, it is convenient to work  under the assumption that a certain invertible operator $T$ exists, which is $\D$-invariant in the sense of \cite{BGST2017}. This means that $\D$ is invariant under the action of $T$, $T^\dagger$, and of their inverse. Then, as already observed above, it is possible to relate
 the pseudo-bosonic operators
$\hat a,\hat b$ to a pair of standard bosonic operator $\ha_0,\ha_0^{\dag}$ through
\bea
\ha f=T\ha_0T^{-1}f,\quad \hb f=T\ha_0^\dag T^{-1}f,\quad \forall f\in \D,\label{pseudo_standard}
\ena
where of course we are also assuming that $\ha_0$ and $\ha_0^{\dag}$  leave $\D$ stable as well\footnote{This is what happens, for instance,   if $\D$ is identified with $\Sc(\Bbb R)$, the set of test functions.}.
Similar equalities can be extended, if $T$ and $T^{-1}$ are both bounded, to two displacement-like operators $\U(z)$ and $\V(z)$ which we can define as follows:
\bea
\U(z)=TW(z)T^{-1},\qquad \V(z)=(T^{-1})^\dag W(z)T^{\dag}.\label{deformeddispla}
\ena
These operators are well defined and bounded for all $z\in\Bbb C$, since $T$, $T^{-1}$ and $W(z)$ are all bounded. Moreover, if $W(z)$ leaves $\D$ invariant, $\U(z)$ and $\V(z)$ do the same. In \cite{BagQUO}
 it has been proved that, for all $f\in\D$, the following series representation can be deduced for these operators:
\be
\U(z)f=\sum_{k=0}^\infty\frac{1}{k!}\left(z\hat b-\overline{z}\,\hat a\right)^kf, \qquad \V(z)f=\sum_{k=0}^\infty\frac{1}{k!}\left(z\hat a^\dagger-\overline{z}\,\hat b^\dagger\right)^kf,
\label{add1}\en
for all $f\in\D$, which shows that, despite the fact that $\hat a$ and $\hat b$ are unbounded, the series above converge strongly on $\D$ to $e^{z \hat b -\overline{z} \hat a}$ and to $e^{z {\hat a}^\dag -\overline{z}{\hat b}^\dag}$ respectively. In what follows, we simply write
 $$
 \U(z)=e^{z \hat b -\overline{z} \hat a}, \qquad \V(z)=e^{z {\hat a}^\dag -\overline{z}{\hat b}^\dag}.
 $$
Using now \eqref{DPBS} for $n=0$, formulas \eqref{deformeddispla} above, and the Baker-Campbell-Hausdorff formula for $W(z)$  we deduce the following alternative (and equivalent) expressions for our BCS:
 \hspace*{-4cm}\bea
\varphi_z(x)=\U(z)\varphi_0(x)=e^{z \hat b -\overline{z} \hat a}\left(Te_0(x)\right)=T\left(e^{-|z^2|/2}\sum_n\frac{z^n}{\sqrt{n!}}e_n(x)\right)=T\Phi_z(x),\\
\Psi_z(x)=\V(z)\,\Psi_0(x)=e^{z {\hat a}^\dag -\overline{z}{\hat b}^\dag}\left((T^{-1})^\dag e_0(x)\right)=(T^{-1})^\dag\left(e^{-|z^2|/2}\sum_n\frac{z^n}{\sqrt{n!}}e_n(x)\right)=\nonumber\\=(T^{-1})^\dag\Phi_z(x).
\label{BCS}
\ena


\subsubsection{Minimum uncertainty relation}
\label{mur}
Going back to the deformed version of the Heisenberg uncertainty relation cited above, we  introduce
the positive operator $\eta=(T^{-1})^\dag T^{-1}$, which is positive with positive inverse. We use $\eta$ to define the new scalar product $\left<\cdot,\cdot\right>_{\eta}=\left<\cdot,\eta\cdot\right>$.
 $\eta$ is usually called in the literature a metric operator. Now, given a (non necessarily hermitian) operator $\hat\Oc$, we define its (extended) uncertainty on the normalized vector $\chi\in\Hil$ {\em according to the new scalar product} as
 \bea
 (\Delta_{\eta}\hat \Oc)^2_\chi=
\left<\chi,\hat \Oc^2\,\chi \right>_{\eta}-\left<\chi,\hat \Oc\,\chi \right>_{\eta}^2. \label{variance}
 \ena
 Of course, if $\eta=\1$ and $\hat\Oc$ is hermitian, we recover the standard definition of uncertainty.
Then, if we introduce, following \eqref{pseudo_standard}, $\hat q=T\hat q_0 T^{-1}$ and $\hat p=T\hat p_0 T^{-1}$, where $\hat q_0=\frac{\ha_0+\ha_0^\dagger}{\sqrt{2}}$ and $\hat p_0=\frac{\ha_0-\ha_0^\dagger}{\sqrt{2}\,i}$ are the hermitian position and momentum operators, easy computations show that
$$
(\Delta_{\eta}\hat p)_{\varphi_z}=(\Delta\hat p_0)_{\Phi_z}, \qquad (\Delta_{\eta}\hat q)_{\varphi_z}=(\Delta\hat q_0)_{\Phi_z},
$$
 where, for instance, $(\Delta\hat p_0)_{\Phi_z}$ is the {\em standard} (i.e., with respect to the original scalar product) variance of $\hat p_0$ on the coherent state $\Phi_z(x)$ in (\ref{31}). Hence
 $$
 (\Delta_{\eta}\hat p)_{\varphi_z}(\Delta_{\eta}\hat q)_{\varphi_z}=(\Delta\hat p_0)_{\Phi_z}(\Delta\hat q_0)_{\Phi_z}=\frac{1}{2},
 $$
 due to the properties of  $\Phi_z(x)$. Then the deformed Heisenberg uncertainty relation for $\hat q$ and $\hat p$ (notice that $[\hat q,\hat p]f=f$, for all $f\in\D$) is saturated by $\varphi_z(x)$.  A similar conclusion can be deduced for the deformed variances of $\hat p^\dagger$ and $\hat q^\dagger$. In this case, however, rather than $\left<\cdot,\cdot\right>_{\eta}$ it is necessary to work with the $\left<\cdot,\cdot\right>_{\eta^{-1}}$ scalar product, which can be defined in complete analogy with $\left<\cdot,\cdot\right>_{\eta}$.
 
{ The importance of defining an appropriate scalar product is evident when dealing with some statistical properties associated to bi-coherent states.
 In fact, it is well known that  the coefficients $c_n=e^{-|z|^2/2}\frac{z^n}{\sqrt{n!}}$ in \eqref{31}
 define a Poissonian distribution, because $|c_n|^2=e^{-|z|^2}\frac{|z|^{2n}}{n!}$ and $\sum_{n\geq0}|\left<e_n(x),\Phi_z(x)\right>|^2=\sum_{n\geq0}|c_n|^2=1$.
 This means that $|c_n|^2$ is the measure of the probability of detecting $n$ quanta per time interval if $|z|^2$ is the average number of quanta.
 Moreover it is known that the uncertainty of the number operator $\hat N_0=\hat a_0^{\dag}\ha_0$ over a coherent state is given by the relation  $(\Delta\hat N_0)^2_{\Phi_z}=|z|^2$.
 Of course, these properties are  direct consequence of the fact that the states $e_n(x)$ are orthonormal, a condition which is not satisfied by the states $\varphi_n(x)$ and $\Psi_n(x)$: 
 this simply implies that, in general, $\sum_{n\geq0}|\left<e_n(x),\varphi_z(x)\right>|^2\neq1,\quad \sum_{n\geq0}|\left<e_n(x),\Psi_z(x)\right>|^2\neq1$, and hence no Poissonian distribution can be retrieved.
 However, considering the modified scalar product $\left<\cdot,\cdot\right>_{\eta}$ and the sets of bi-orthonormal states used to build the bi-coherent states, we  find 
 $$\sum_{n\geq0}|\left<\varphi_n(x),\varphi_z(x)\right>_{\eta}|^2=\sum_{n\geq0}|\left<T^{-1}\varphi_n(x),T^{-1}\varphi_z(x)\right>|^2=\sum_{n\geq0}|\left<e_n(x),\Phi_z(x)\right>|^2=\sum_{n\geq0}|c_n|^2=1.
 $$
 Analogously   $$\sum_{n\geq0}|\left<\Psi_n(x),\Psi_z(x)\right>_{\eta^{-1}}|^2=1.$$

 Moreover, using \eqref{variance},
 $$
(\Delta_{\eta}\hat N)^2_{\varphi_z(x)}=(\Delta_{\eta^{-1}}\hat N^{\dag})^2_{\Psi_z(x)}=(\Delta\hat N_0)^2_{\Phi_z(x)}=|z|^2,
 $$
 where $\hat N=T^{-1} \hat N_0 T$ is the pseudo bosonic number operator.  
 Hence, we recover here similar statistical interpretation as for the  coherent states. However, the price to pay is that we need to deform the scalar product accordingly to the state we are considering. Which is not necessarily the best one can expect.
 
%
 }

\section{Bi-squeezed states}
\label{BSS}

In this section, after a short review of some well known properties of  squeezed states, we analyze  the particular case in which regular BSS arise from the application of a bounded operator $T$, with bounded inverse, on a standard squeezed state.

In analogy with what we have done in Section \ref{subsub:BCS}, we will work  under the assumption that $T$ is $\D$-invariant, and
we further refine our assumptions by requiring that $S(z)$, see  equation \eqref{sqeeoperator} below,  leaves  $\D$ invariant, too. In Section \ref{sec::ubounded} we will also briefly discuss what happens when $T$ or $T^{-1}$ are unbounded.

\subsection{Standard squeezed states}\label{sss}

Squeezed states are a class of minimum-uncertainty states that are strongly connected to coherent states. The main difference between  coherent and squeezed states is that for the latter the noise in the quadratures can be different  while for the former is equal, see \cite{WallsMil}. Squeezed states play a very important role, for instance, in quantum optics (in non-linear phenomena as optical parametric oscillation and four-wave mixing, \cite{WallsMil}), and in quantum electrodynamics (for example in dynamical Casimir effect, \cite{Nori2011}).


Squeezed states are defined by introducing first the standard unitary squeezing operator
\bea S(z)=e^{\frac{z}{2}(\hat a_0^{\dag})^2-\frac{\overline{z}}{2}(\hat a_0)^2},\label{sqeeoperator}\ena
 $z\in\Bbb C$, and then the normalized
squeezed state by its action on $e_0(x)$,
\bea
\psi^0_{z}(x)=S(z)\,e_{0}(x).\label{standard_squizzo}
\ena
Sometimes it is convenient to rewrite $S(z)$ in a factorized form as follows:
\be S(z)=e^{\lambda_b(z) (\hat a_0^\dag)^2}e^{\lambda(z) (\ha_0\ha_0^\dag+\ha_0^\dag \ha_0)}e^{\lambda_a(z) \ha_0^2},\label{add1}\en
where $z=re^{i\theta}$, $\lambda(z)=-\frac{1}{2}\log(\cosh r) , \lambda_a(z)=\frac{1}{2}e^{-i\theta}\tanh r$, and $\lambda_b(z)=-\frac{1}{2}e^{i\theta}\tanh r=-\overline{\lambda_a(z)}$. The  factorization of $S(z)$ allows us to express the squeezed state as
\be
 \psi^0_z(x)=S(z)e_{0}(x)=e^{\lambda(z)} \sum_{k=0}^\infty\left(\lambda_b(z)\right)^k\frac{\sqrt{(2k!)}}{k!}e_{2k}(x),
 \label{squizzi0}
\en
which is uniformly convergent, $\forall z\in\Bbb C$.
Moreover, the coherent squeezed states, defined as
\be\psi^\alpha_{z}(x)=W(\alpha)S(z)e_{0}(x),\label{css}\en
can also be introduced. This is  the result of the successive  applications of the displacement and of the squeezing operators on the vacuum $e_0(x)$. Well known features of coherent squeezed states, \cite{BarRad}, are the following:
\bea
 \left[\cosh r \left(\ha_0-\alpha\right)+\exp{(i\theta)}\sinh r \left(\ha_0^{\dag}-\overline{\alpha}\right)\right]\psi^\alpha_{z}(x)&=&0,\label{squeezeigen}\\
 \left(\ha_0+z \ha_0^\dag\right)\psi^\alpha_{z}(x)&=&\alpha\psi^\alpha_{z}(x),\label{squeezeigen2}\\
 \langle \psi^\alpha_{z},\psi^\alpha_{z}\rangle&=&1 \label{csorto},\\
\frac{1}{\pi}\int_{\Bbb C} d\alpha \left<f,\psi_{z}^{\alpha}\right>\left< \psi_{z}^{\alpha},g\right >&=&\left< f,g\right >,\label{sidentiti}
\ena
for all $f,g\in\Hil$. Then, the vectors $\psi^\alpha_{z}(x)$ are normalized and resolve the identity. Notice, however, that the set $\{\psi^0_{z}(x)\}$ does not! It is only the presence of $\alpha$, and of the related displacement operator, which guarantees the validity of equation (\ref{sidentiti}). 

From 
\eqref{squeezeigen} we observe that $\psi^\alpha_{z}(x)$ is the vacuum of the operator $$A=\left[\cosh r \left(\ha_0-\alpha\right)+\exp{(i\theta)}\sinh r \left(\ha_0^{\dag}-\overline{\alpha}\right)\right],$$ which incidentally satisfies the commutation rule  $[A,A^\dagger]=\left((\cosh r)^2-(\sinh r)^2\right)\1=\1$.

\subsection{Doubling the squeezing operator}
\label{sec::Squeezing operators}

Extending now what we have shown in Section \ref{subsub:BCS},  
we   prove the existence of a pair of deformed squeezing operators and we discuss their relation with the standard squeezing operator
 $S(z)$ through a similarity operation which involves the same operator $T$ appearing, for instance, in (\ref{DPBS}) and in (\ref{deformeddispla}).

We first define, $\forall z\in \mathbb{C}$, the operators $\Sc(z)$ and $\mathcal{T} (z)$ as follows:
 \be \Sc(z)f=T S(z)T^{-1}f,\quad \mathcal{T} (z)f=(T^{-1})^\dag S(z)T^{\dag}f,\label{squeezeoperators}\en
 $\forall f\in\D$.
 Of course the above definitions are well posed (in fact, both $\Sc(z)$ and $\mathcal{T} (z)$ are bounded), and produce results in $\D$, since $T$ is $\D-$stable, \cite{BGST2017}, and $S(z)$ leaves $\D$ invariant. The definitions of $\Sc(z)$ and $\mathcal{T} (z)$ are suggested by the analogous definitions adopted for the displacement operators in \eqref{deformeddispla} and, as in \eqref{deformeddispla},  formulas \eqref{squeezeoperators} can be extended to all of $\Hil$. In this way we can get  the following intertwining relation between $\mathcal{T}(z)$ and $\mathcal{S}(z)$:
 $$
 TT^\dag\mathcal{T}(z)=\Sc(z)TT^\dag.
 $$
 Intertwining relations are quite relevant in connection with quantum solvable models, \cite{intop}. However, this is not main interest here and we will not consider further this aspect.

 It is  possible to describe the actions of $\Sc(z), \mathcal{T} (z)$ in terms of convergent series. 

\begin{prop}\label{propBSQOP1}
The following equalities holds:
\be \Sc(z)f= \sum_{k\geq0}\frac{1}{k!}\left(\frac{z}{2}\hb^2-\frac{\overline{z}}{2}\ha^2\right)^{k}f,\quad \mathcal{T} (z)f= \sum_{k\geq0}\frac{1}{k!}\left(\frac{z}{2}(\ha^\dagger)^2-\frac{\overline{z}}{2}(\hb^\dagger)^2\right)^{k}f,\label{propBSQOP2}\en
for all {$f\in\mathcal{D}$}.
\end{prop}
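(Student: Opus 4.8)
The plan is to establish the series representation for $\Sc(z)$ and to note that the one for $\mathcal{T}(z)$ follows by the same argument with the obvious substitutions. The starting point is the definition $\Sc(z)f = T\,S(z)\,T^{-1}f$ together with the standard identity $S(z) = \sum_{k\geq 0}\frac{1}{k!}\left(\frac{z}{2}(\ha_0^\dagger)^2 - \frac{\overline z}{2}\ha_0^2\right)^k$, which holds as a strongly convergent series on $\D$ since $S(z)$ leaves $\D$ invariant and $\ha_0,\ha_0^\dagger$ are $\D$-stable (so every term is well defined and the partial sums converge to $S(z)$ applied to any vector of $\D$). The idea is then to conjugate this series termwise by $T$, using $T\,\ha_0\,T^{-1} = \ha$ and $T\,\ha_0^\dagger\,T^{-1} = \hb$ from \eqref{pseudo_standard}, which gives, for $f \in \D$,
\[
T\left(\frac{z}{2}(\ha_0^\dagger)^2 - \frac{\overline z}{2}\ha_0^2\right)^k T^{-1} f = \left(\frac{z}{2}\hb^2 - \frac{\overline z}{2}\ha^2\right)^k f,
\]
because $T T^{-1} = \id$ on $\D$ lets us insert $T^{-1}T$ between every pair of factors; here I use that $T^{-1}f \in \D$ (by $\D$-stability of $T^{-1}$) so that all intermediate vectors stay in $\D$ and every operator in sight is legitimately applied.

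The second step is to justify interchanging $T$ with the infinite sum. Since $T^{-1}$ is bounded, $T^{-1}f_N \to T^{-1}f$ whenever $f_N \to f$; applying the strongly convergent expansion of $S(z)$ to the vector $T^{-1}f \in \D$ shows that the partial sums $\sum_{k=0}^{N}\frac{1}{k!}\left(\frac{z}{2}(\ha_0^\dagger)^2 - \frac{\overline z}{2}\ha_0^2\right)^k T^{-1}f$ converge in $\Hil$ to $S(z)T^{-1}f$; then boundedness of $T$ gives convergence of $\sum_{k=0}^{N}\frac{1}{k!}\left(\frac{z}{2}\hb^2 - \frac{\overline z}{2}\ha^2\right)^k f = T\sum_{k=0}^{N}\frac{1}{k!}(\cdots)^k T^{-1}f$ to $T S(z)T^{-1}f = \Sc(z)f$. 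This is precisely the claimed identity. The argument for $\mathcal{T}(z)$ is identical after replacing $T$ by $(T^{-1})^\dagger$ and $T^{-1}$ by $T^\dagger$, and using the corresponding conjugation relations $(T^{-1})^\dagger \ha_0 T^\dagger = \hb^\dagger$ and $(T^{-1})^\dagger \ha_0^\dagger T^\dagger = \ha^\dagger$, which follow by taking adjoints in \eqref{pseudo_standard} on $\D$ (noting that $\D$ is invariant under all of $T^{\pm 1}$ and their adjoints, so these identities make sense on $\D$).

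The only genuinely delicate point is the termwise conjugation of \emph{powers} of the unbounded operator $\frac{z}{2}(\ha_0^\dagger)^2 - \frac{\overline z}{2}\ha_0^2$: one must be careful that all the vectors produced by partial application of a $k$-fold product lie in $\D$, so that the insertion of $T^{-1}T = \id$ is legitimate at every stage. This is exactly where the hypothesis that $\D$ is invariant under $T$, $T^{-1}$, and $S(z)$, and that $\ha_0^\sharp \D \subseteq \D$, is used in full strength; it is the same mechanism that underlies the BCS series \eqref{add1} proved in \cite{BagQUO}, so no new difficulty arises beyond bookkeeping. I would therefore present the proof for $\Sc(z)$ in detail along the lines above and then simply remark that the expression for $\mathcal{T}(z)$ is obtained by the same computation, substituting $T \mapsto (T^{-1})^\dagger$ and appealing to \eqref{pseudo_standard} and the $\D$-invariance assumptions.
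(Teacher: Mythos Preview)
Your proof is correct and follows essentially the same approach as the paper: the paper proves the conjugation identity $T\bigl(\tfrac{z}{2}(\ha_0^\dagger)^2-\tfrac{\overline z}{2}\ha_0^2\bigr)^k T^{-1}f=\bigl(\tfrac{z}{2}\hb^2-\tfrac{\overline z}{2}\ha^2\bigr)^k f$ by an explicit induction on $k$ (rather than your ``insert $T^{-1}T$ between factors'' description, which is the same argument), and then uses continuity of $T$ on the strongly convergent expansion of $S(z)$ applied to $T^{-1}f\in\D$, exactly as you do. The treatment of $\mathcal{T}(z)$ is likewise dismissed in one line as analogous.
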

\begin{proof}
We first prove that, for all $f\in \D$, and for all $k\in\mathbb{N}$,
\bea T\left(\frac{z}{2}(\hat a_0^{\dag})^2-\frac{\overline{z}}{2}\hat a_0^2\right)^kT^{-1}f=\left(\frac{z}{2}\hat b^2-\frac{\overline{z}}{2}\hat a^2\right)^kf.\label{st1}
\ena
For $k=0$ the equality is evident.
For $k=1$ the proof follows from \eqref{pseudo_standard} and from the stability of $\D$:
$$
T\left(\frac{z}{2}(\ha_0^{\dag})^2-\frac{\overline{z}}{2}\ha_0^2\right)T^{-1}f=
\left(\frac{z}{2}T\ha_0^{\dag}T^{-1}T\ha_0^{\dag}T^{-1}-\frac{\overline z}{2}T\ha_0T^{-1}T\ha_0T^{-1}\right)f=
\left(\frac{z}{2}\hb^2-\frac{\overline{z}}{2}\ha^2\right)f.
$$

Now assuming that \eqref{st1}
 holds for $k$, and recalling that all the operators are $\D$-stable, we get
\beano 
&&T\left(\frac{z}{2}(\ha_0^{\dag})^2-\frac{\overline{z}}{2}\ha_0^2\right)^{k+1}T^{-1}f =T\left(\frac{z}{2}(\ha_0^{\dag})^2-\frac{\overline{z}}{2}\ha_0^2\right)^kT^{-1}T
\left(\frac{z}{2}(\ha_0^{\dag})^2-\frac{\overline{z}}{2}\ha_0^2\right)T^{-1}f=\\
&&=T\left(\frac{z}{2}(\ha_0^{\dag})^2-\frac{\overline{z}}{2}\ha_0^2\right)^kT^{-1}\left(\frac{z}{2}\hb^2-\frac{\overline{z}}{2}\ha^2\right)f=\\
&&=\left(\frac{z}{2}\hb^2-\frac{\overline{z}}{2}\ha^2\right)^{k}\left(\frac{z}{2}\hb^2-\frac{\overline{z}}{2}\ha^2\right)f=\\
&&=\left(\frac{z}{2}\hb^2-\frac{\overline{z}}{2}\ha^2\right)^{k+1}f.
\enano
This is a consequence of our induction hypothesis, and of the fact that $\left(\frac{z}{2}\hb^2-\frac{\overline{z}}{2}\ha^2\right)f\in\D$.
Since $S(z)$ satisfies the following expansion:
$$
S(z)\tilde f=\sum_{k=0}^\infty\frac{1}{k!}\left(\frac{z}{2}(\ha_0^{\dag})^2-\frac{\overline{z}}{2}\ha_0^2\right)^{k}\tilde f
$$
for all $\tilde f\in\D$, the continuity of $T$ implies that
$$
TS(z)\tilde f=\sum_{k=0}^\infty\frac{1}{k!}T\left(\frac{z}{2}(\ha_0^{\dag})^2-\frac{\overline{z}}{2}\ha_0^2\right)^{k}\tilde f
$$
for all such $\tilde f$. Moreover, $\tilde f$ can be written as $T^{-1}T\tilde f=T^{-1}f$, where $f=T\tilde f\in\D$. Then we deduce that
\beano \Sc(z)f=TS(z)T^{-1}f=
\sum_{k\geq0}\frac{1}{k!}\left(\frac{z}{2}\hb^2-\frac{\overline{z}}{2}\ha^2\right)^{k} f,
\enano
for all $f\in\D$, as we had to prove. The proof for $\mathcal{T}(z)$ is   similar.
\end{proof}
Despite of the unboundedness of $\hat a$ and $\hat b$, the series in \eqref{propBSQOP2} converge strongly on $\D$, and as we did for the deformed displacements operators, from now on we simply write
 \bea
 \mathcal{\Sc} (z)&=&e^{ \frac{1}{2}z\hb^2-\frac{1}{2}\bar{z}\ha^2},\label{squeez_S}\\
 \mathcal{T} (z)&=&e^{\frac{1}{2}z(\ha^\dag)^2-\frac{1}{2}\bar{z}(\hb^\dag)^2}\label{squeez_T}.
 \ena
These operators satisfy the following relations:
	\be \Sc^{-1}(z)=\Sc(-z)=\mathcal{T} ^\dag(z),\quad \mathcal{T}^{-1}(z)=\mathcal{T}(-z)=\Sc^\dag(z). \label{squeeInv}\en

Incidentally we observe that an alternative (formal) representation of the above operators can be deduced using the  Baker-Campbell-Hausdorff formula. We get
\bea
\mathcal{\Sc} (z)&=&e^{\lambda_b(z) \hb^2}e^{\lambda(z) (\ha \hb+\hb \ha)}e^{\lambda_a(z) \ha^2}=e^{\lambda_a(z) \ha^2}e^{-\lambda(z) (\ha \hb+\hb \ha)}e^{\lambda_b(z) \hb^2},\label{bch1}\\
\mathcal{T} (z)&=&e^{\lambda_b(z) (\ha^\dag)^2}e^{\lambda(z) (\hb^\dag \ha^\dag+\ha{^\dag}\hb^\dag)}e^{\lambda_a(z) (\hb^\dag)^2}=e^{\lambda_a(z) (\hb^\dag)^2}e^{-\lambda(z) (\hb^\dag \ha^\dag+\ha{^\dag}\hb^\dag)}e^{\lambda_b(z) (\ha^\dag)^2},\nonumber\\
\label{bch2}
\ena
which are the deformed versions of equation \eqref{add1} for $S(z)$.
\vspace{1mm}

{\bf Remark:--} The reason why we call these formulas {\em formal} is because, while $\Sc(z)$ and $\mathcal{T}(z)$ are bounded, the single terms in (\ref{bch1}) and (\ref{bch2}) are not. Hence, for instance, there is no guarantee a priori that $e^{\lambda_b(z) \hb^2}$ is densely defined, or leaves $\D$ invariant, or that, at least, maps $\D$ into the domain of $e^{\lambda(z) (\ha \hb+\hb \ha)}$.

\subsection{Regular bi-squeezed states}
\label{sec::Bi-squeezed states}

We are now ready to define a pair of  states,  $\tau_z(x)$ and $\kappa_z(x)$, which, as we will shown later, can be considered a natural extension of the standard squeezed state in
 \eqref{standard_squizzo}.

\begin{defn}\label{RBSS}
A pair of states $\left(\tau_z(x),\kappa_z(x)\right)$, $x\in \R$, $z\in\mathbb{C},$ are called $\mathbb{C}$-regular BSS ($\mathbb{C}$-RBSS) if there exist a squeezed state $\psi^0_z(x)\in \D$, and a bounded operator $T$ with bounded inverse $T^{-1}$, $\D$-stable,  such that
\bea
\tau_z(x)=T\psi^0_z(x),\quad \kappa_z(x)=(T^{-1})^\dag\psi^0_z(x).\label{rbss_def}
\ena
\end{defn}
It  is clear, first of all,  that $\tau_z(x),\kappa_z(x)\in\D$. Moreover,  for all
$z\in\mathbb{C}$, 
$\lVert\tau_z \rVert=\lVert T\psi^0_z \rVert\leq \lVert T\rVert\lVert\psi^0_z \rVert\leq \lVert T\rVert $ and 
$\lVert\kappa_z \rVert\leq \lVert T^{-1}\rVert $. We shall see in Section \ref{sec::ubounded}
that a similar definition is not the most convenient  when $T$ or $T^{-1}$ are unbounded.

It is interesting to observe  how these states are related to the operators $\Sc(z)$ and $\mathcal{T}(z)$. This is what the next proposition is about.

\begin{prop}\label{squuezeprop}
Let  $\left(\tau_z(x),\kappa_z(x)\right)$, be a pair of $\mathbb{C}$-RBSS, and
$\varphi_0(x),\Psi_0(x)$ the two vacua in  \eqref{groundstate}.
Then,
\bea
\tau_z(x)=\Sc(z)\varphi_0(x),\quad \kappa_z(x)=\mathcal{T}(z)\Psi_0(x).\label{bcsT}
\ena
Moreover they satisfy the bi-normalization condition
\bea
\langle \tau_z(x),\kappa_z(x)\rangle=1.\label{biosqueeze}
\ena
\end{prop}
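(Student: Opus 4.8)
The plan is to derive both identities in \eqref{bcsT} from the definitions of $\Sc(z)$, $\mathcal{T}(z)$ in \eqref{squeezeoperators}, the definition of the $\mathbb{C}$-RBSS in \eqref{rbss_def}, and the relation \eqref{DPBS} (for $n=0$) linking the vacua $\varphi_0,\Psi_0$ to the bosonic ground state $e_0$. First I would compute $\Sc(z)\varphi_0(x)$. Since $T$ is $\D$-stable and $\varphi_0\in\D$, we may write $\varphi_0(x)=Te_0(x)$ by \eqref{DPBS}; then $\Sc(z)\varphi_0(x)=TS(z)T^{-1}\varphi_0(x)=TS(z)T^{-1}Te_0(x)=TS(z)e_0(x)=T\psi^0_z(x)$, using \eqref{standard_squizzo} in the last step. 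By Definition \ref{RBSS} this equals $\tau_z(x)$, which is the first identity. For the second, I would similarly use $\Psi_0(x)=(T^{-1})^\dagger e_0(x)$ from \eqref{DPBS}, so that $\mathcal{T}(z)\Psi_0(x)=(T^{-1})^\dagger S(z)T^\dagger (T^{-1})^\dagger e_0(x)=(T^{-1})^\dagger S(z)e_0(x)=(T^{-1})^\dagger\psi^0_z(x)=\kappa_z(x)$, where $T^\dagger(T^{-1})^\dagger=(T^{-1}T)^\dagger=\1$ and again \eqref{standard_squizzo}, \eqref{rbss_def} are invoked. The only subtlety worth noting is that $e_0\in\D$ and the various stability hypotheses (on $T$, $T^\dagger$, and $S(z)$) guarantee that all intermediate vectors lie in $\D$, so the compositions are legitimate; since the operators are in fact bounded here, there is no domain issue beyond bookkeeping.

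For the bi-normalization \eqref{biosqueeze}, the plan is to reduce it to the normalization \eqref{standard_squizzo}--\eqref{csorto} of the standard squeezed state (with $\alpha=0$). Using \eqref{rbss_def},
\[
\langle \tau_z(x),\kappa_z(x)\rangle
=\langle T\psi^0_z(x),(T^{-1})^\dagger\psi^0_z(x)\rangle
=\langle T^{-1}T\psi^0_z(x),\psi^0_z(x)\rangle
=\langle \psi^0_z(x),\psi^0_z(x)\rangle,
\]
where I move $(T^{-1})^\dagger$ across the inner product to produce $T^{-1}$ on the left. Then $\langle\psi^0_z,\psi^0_z\rangle=1$ because $S(z)$ is unitary and $e_0$ is normalized (this is \eqref{csorto} specialized to $\alpha=0$, or directly from \eqref{standard_squizzo}). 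This completes the argument.

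There is no real obstacle here: everything is a short computation once \eqref{DPBS}, \eqref{squeezeoperators}, \eqref{rbss_def}, \eqref{standard_squizzo} are in hand, and the boundedness of $T$, $T^{-1}$ makes the manipulations with adjoints and inner products unambiguous. If I had to point to the most delicate point, it is simply making sure the adjoint relations $T^\dagger(T^{-1})^\dagger=\1=(T^{-1})^\dagger T^\dagger$ are applied on the correct side and that one consistently uses $\varphi_0=Te_0$, $\Psi_0=(T^{-1})^\dagger e_0$ rather than the reverse — a sign/placement check rather than a genuine difficulty. I would present the proof as: (i) establish \eqref{bcsT} via the two displayed chains of equalities, (ii) establish \eqref{biosqueeze} via the inner-product manipulation above, each a couple of lines.
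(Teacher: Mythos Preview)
Your proof is correct and follows essentially the same approach as the paper: the identities in \eqref{bcsT} are obtained by inserting $T^{-1}T$ (resp.\ $T^\dagger(T^{-1})^\dagger$) and using \eqref{DPBS}, \eqref{standard_squizzo}, \eqref{squeezeoperators}, \eqref{rbss_def}. The only cosmetic difference is in the bi-normalization: the paper uses the just-established \eqref{bcsT} together with $\Sc^\dagger(z)=\mathcal{T}^{-1}(z)$ from \eqref{squeeInv} to reduce to $\langle\varphi_0,\Psi_0\rangle=1$, whereas you go back to \eqref{rbss_def} and reduce to $\|\psi^0_z\|^2=1$ via unitarity of $S(z)$; both are equally valid one-line computations.
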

\begin{proof}
Using the boundedness of $T$, $T^{-1}$ and $S(z)$, we have
$$\tau_z(x)=T\psi_z^0(x)=T\left(S(z)e_0(x)\right)=\left(TS(z)T^{-1}\right)\left(Te_0(x)\right)=\mathcal{\Sc}(z)\varphi_0(x).$$
Similarly we prove that $\kappa_z(x)=\mathcal{T}(z)\Psi_0(x)$. Now, since $\Sc^\dagger(z)=\mathcal{T}^{-1}(z)$,
$$
\langle \tau_z,\kappa_z\rangle=\langle\Sc(z)\varphi_0,\mathcal{T}(z)\Psi_0\rangle=\langle \varphi_0,\mathcal{T}^{-1}(z)\mathcal{T}(z)\Psi_0\rangle=\langle \varphi_0,\Psi_0\rangle=1,
$$
due to (\ref{39}).

\end{proof}

The above proposition states that $\left(\tau_z(x),\kappa_z(x)\right)$, originally introduced as in (\ref{rbss_def}), can also be obtained applying the deformed squeezing
operators $\mathcal{S}(z)$ and $\mathcal{T}(z)$ over, respectively, $\varphi_0(x)$ and $\Psi_0(x)$. 
Moreover, using the continuity of $T,(T^{-1})^\dagger$, and the expansion \eqref{squizzi0},  it is straightforward to express $\tau_z(x)$ and $\kappa_z(x)$ still in a different way:
\begin{eqnarray}
&&\tau_z(x)=T\psi^0_z(x)=T\left[e^{\lambda(z)} \sum_{k=0}^\infty\left(\lambda_b(z)\right)^k\frac{\sqrt{(2k!)}}{k!}e_{2k}(x)\right]=
e^{\lambda(z)} \sum_{k=0}^\infty\left(\lambda_b(z)\right)^k\frac{\sqrt{(2k!)}}{k!}Te_{2k}(x)= \nonumber \\
&& =e^{\lambda(z)}\sum_{k=0}^\infty\lambda_b(z)^k\frac{\sqrt{(2k!)}}{k!}\varphi_{2k}(x),\label{series1}\\
&&\kappa_z(x)=(T^{-1})^\dagger \psi^0_z(x)=(T^{-1})^\dagger \left[e^{\lambda(z)} \sum_{k=0}^\infty\left(\lambda_b(z)\right)^k\frac{\sqrt{(2k!)}}{k!}e_{2k}(x)\right]=\nonumber\\
&& =e^{\lambda(z)} \sum_{k=0}^\infty\left(\lambda_b(z)\right)^k\frac{\sqrt{(2k!)}}{k!}(T^{-1})^\dagger e_{2k}(x)= e^{\lambda(z)}\sum_{k=0}^\infty\left(\lambda_a(z)\right)^k\frac{\sqrt{(2k!)}}{k!}\Psi_{2k}(x).\label{series2}
\end{eqnarray}


These expansions will appear to be particularly relevant in Section \ref{sec::ubounded}, in connection with non regular pseudo-bosons, i.e. with the case in which $T$ or $T^{-1}$ are unbounded and, therefore, not continuous.

\vspace{2mm}

{\bf Remark:--} It is possible to show that these vectors are stable under time evolution, at least if we assume a pseudo-bosonic number operator for the Hamiltonian of the system. Let, in fact, $H=ba$. Then, if we can bring the  operator $e^{-iHt}$ inside the infinite sum\footnote{This is not granted, since this operator is not unitary.}, we get 
$$
\tau_z(x,t)=e^{-iHt}\tau_z(x)=e^{\lambda(z)}\sum_{k=0}^\infty\lambda_b(z)^k\frac{\sqrt{(2k!)}}{k!}e^{-2ikt}\varphi_{2k}(x).
$$
Now, recalling that $\lambda_b(z)=-\frac{1}{2}e^{i\theta}\tanh r$ and that $\lambda(z)$ does not depend on $\theta$, we conclude that $\tau_z(x,t)$ coincides with $\tau_z(x)$, but with $\theta$ replaced by $\theta-2t$. This implies that the time evolution of a squeezed state is still a squeezed state. The same conclusion can be deduced, not surprisingly, also for the time evolution of $\kappa_z(x,t)$.
Of course  with similar arguments it can be shown that also the bi-coherent states
are stable under time evolution, extending the work done in \cite{maa} in the framework of the pseudo-fermionic operators.

\vspace{3mm}

{\bf Example: the deformed harmonic oscillator}

\vspace{1mm}

We want to show now how BSS look like for a very simple system.
Consider the harmonic oscillator and its Hamiltonian $H_0=\ha_0^\dagger\ha+\frac{1}{2}\1$. Its ground state is $e_0(x)=\frac{1}{\pi^{1/4}}\textrm{exp}(-\frac{1}{2}x^2)$.
We introduce two function $u,v\in\Sc(\mathbb{R})$, satisfying  $\langle u,v\rangle=1$,
and two complex scalar $\alpha,\beta$ satisfying  $\alpha+\beta+\alpha\beta=0$. Let $P_{u,v}$ be the operator defined  as $P_{u,v}f=\langle u,f\rangle v$, for all $f\in\Lc^2(\Bbb R)$, and let $T$ be
the operator $T=\1+\alpha P_{u,v}$. Then $T$ is bounded with bounded inverse $T^{-1}=\1+\beta P_{u,v}$.
The operator $T$ was already considered in \cite{BGST2017}, where it was proved to be  $\Sc(\R)$-stable and to define the following biorthogonal Riesz bases
\beano \F_\varphi&=&\{\varphi_n(x)=T e_n(x)=e_n(x)+\alpha\left<u,e_n\right>v(x)\}, \\
\F_\Psi&=&\{\Psi_n(x)=(T^{-1})^\dag e_n(x)=e_n(x)+\overline{\beta}\left<v,e_n\right>u(x)\}.\enano
Here  $\F_e=\{e_n(x)\in\SC\}$ is the o.n. basis of $\Lc^2(\Bbb R)$ of eigenstates of $H_0$.
The functions $\varphi_n(x)$ and $\Psi_n(x)$ are in $\SC$ as well. The $\mathbb{C}$-RBSS turn out to be 
\beano
\tau_z(x)=e^{\lambda(z)}\sum_{k\geq0}\frac{\lambda_b(z)^k}{k!}\sqrt{(2k)!}\left(e_{2k}(x)+\alpha\left<u,e_{2k}\right>v(x)\right)=\psi^0_{z}(x)+\alpha\left<u,\psi^0_{z}\right>\,v(x),\\ \kappa_z(x)=e^{\lambda(z)}\sum_{k\geq0}\frac{\lambda_b(z)^k}{k!}\sqrt{(2k)!}\left(e_{2k}(x)+\overline{\beta}\left<v,e_{2k}\right>u(x)\right)=\psi^0_{z}(x)+\overline{\beta}\left<v,\psi^0_{z}\right>\,u(x),
\enano
where $\psi^0_{z}(x)$ is the standard squeezed state  in (\ref{squizzi0}). Hence, for the deformed harmonic oscillator with Hamiltonian $H=\hat b\hat a+\frac{1}{2}\,\1$, see (\ref{pseudo_standard}), the BSS are simply two suitable linear combinations of $\psi^0_{z}(x)$ with $v(x)$ and with $u(x)$ respectively, with coefficients which are related to $\psi^0_{z}(x)$ itself.

\vspace{3mm}

Formula (\ref{sidentiti}) shows that $\psi^0_{z}(x)$   alone is not enough to produce a resolution of the identity. We also need to use the displacement operator. This is the reason why we introduce now the following definition:

\begin{defn}\label{RCBSS}
Let $\alpha\in\mathbb{C}$, $x\in \R$.
A pair of $\mathbb{C}$-RBSS $\left(\tau_{z}^{\alpha}(x),\kappa_{z}^{\alpha}\right(x))$,  are called $\mathbb{C}$-regular coherent BSS ($\mathbb{C}$-RCBSS), if there exist a coherent squeezed state $\psi^\alpha_z(x)\in\D$, \eqref{css}, and a bounded $\D$-stable operator $T$, with bounded inverse $T^{-1}$,  such that
\bea
\tau_{z}^{\alpha}(x)=T\psi^\alpha_z(x),\quad \kappa_{z}^{\alpha}(x)=(T^{-1})^\dag\psi^\alpha_z(x).\label{rcbss_def}
\ena

\end{defn}

It is clear that $\tau_{z}^{0}(x)=\tau_{z}(x)$ and $\kappa_{z}^{0}(x)=\kappa_{z}(x)$, see \eqref{rbss_def}, and that  $\tau_{z}^{\alpha}(x)$ and $\kappa_{z}^{\alpha}(x)$ are in $\D$.
It is also  easy to extend \eqref{bcsT}.  In fact, $\forall \alpha\in \mathbb{C},$ we deduce that
\bea
\tau_{z}^{\alpha}(x)=TW(\alpha)S(z)e_{0}=\left(TW(\alpha)T\right)\left(T^{-1}S(z)T\right)T^{-1}e_{0}(x)=\mathcal{U(\alpha)}\Sc(z)\varphi_0(x),
\ena
and analogously
\bea
\quad \kappa_{z}^{\alpha}(x)=\mathcal{V(\alpha)}\mathcal{T}(z)\Psi_0(x).\label{bcsdT}
\ena
The following proposition can now be proved:
\begin{prop}\label{propBSQOP22}
Let  $\left(\tau_{z}^{\alpha}(x),\kappa_{z}^{\alpha}(x)\right)$ be a pair of $\mathbb{C}$-RCBSS.
The following equalities hold, $\forall \alpha\in\mathbb{C}$, $\forall z\in\mathbb{C}$, and $\forall f,g\in\Hil$ :
\bea
&& \left[\cosh r \left(\ha-\alpha \right)+\exp{(i\theta)}\sinh r\left(\hb-\overline{\alpha} \right) \right]\tau^\alpha_{z}(x)=0\label{247}\\
 && \left[\cosh r \left(\hb^\dagger-\alpha \right)+\exp{(i\theta)}\sinh r\left(\ha^\dagger-\overline{\alpha} \right) \right]\kappa^\alpha_{z}(x)=0,\label{248}\\
  && \left(\ha+z \hb\right)\tau^\alpha_{z}(x)=\alpha\tau^\alpha_{z}(x),\quad 
  \left(\hb^\dag+z \ha^\dag\right)\kappa^\alpha_{z}(x)=\alpha\kappa^\alpha_{z}(x)\label{squeezeigen2alpha}\\
  && \langle \tau_{z,\alpha}(x),\kappa_{z,\alpha}(x)\rangle=1, \label{249}\\
  && \frac{1}{\pi}\int_{\mathbb{C}}d\alpha\langle f, \tau_{z}^{\alpha}(x)\rangle\langle\kappa_{z}^{\alpha}(x),g\rangle=\langle f,g \rangle,\quad   \label{243}
\ena
\end{prop}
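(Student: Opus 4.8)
The plan is to reduce every one of the five identities in Proposition~\ref{propBSQOP22} to the corresponding known fact about the standard coherent squeezed state $\psi^\alpha_z(x)$, recorded in \eqref{squeezeigen}--\eqref{sidentiti}, by conjugating with the bounded operator $T$ (or $(T^{-1})^\dagger$). The bridge is the relation \eqref{pseudo_standard}, $\ha = T\ha_0 T^{-1}$ and $\hb = T\ha_0^\dagger T^{-1}$ on $\D$, together with $\tau_z^\alpha = T\psi^\alpha_z$ and $\kappa_z^\alpha = (T^{-1})^\dagger\psi^\alpha_z$ from \eqref{rcbss_def}; taking adjoints also gives $\ha^\dagger = (T^{-1})^\dagger \ha_0^\dagger T^\dagger$ and $\hb^\dagger = (T^{-1})^\dagger \ha_0 T^\dagger$, which is what is needed for the $\kappa$-equations.

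First I would prove \eqref{247}. Apply $T$ to \eqref{squeezeigen}: since $T$ is linear and everything lives in $\D$, which is $T$- and $T^{-1}$-stable, we have
\be
0 = T\bigl[\cosh r(\ha_0-\alpha) + e^{i\theta}\sinh r(\ha_0^\dagger - \overline\alpha)\bigr]\psi^\alpha_z
= \bigl[\cosh r(\ha - \alpha) + e^{i\theta}\sinh r(\hb - \overline\alpha)\bigr]T\psi^\alpha_z,
\en
using $T\ha_0 T^{-1} = \ha$, $T\ha_0^\dagger T^{-1} = \hb$, and $T(\ha_0\psi^\alpha_z) = T\ha_0 T^{-1}(T\psi^\alpha_z) = \ha\tau^\alpha_z$ (legitimate because $\psi^\alpha_z\in\D$ and $\ha_0$ leaves $\D$ stable, so no domain issue arises). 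Then $T\psi^\alpha_z = \tau^\alpha_z$ gives \eqref{247}. Equation \eqref{248} is the mirror image: apply $(T^{-1})^\dagger$ to \eqref{squeezeigen}, noting $(T^{-1})^\dagger\ha_0 T^\dagger = \hb^\dagger$ and $(T^{-1})^\dagger\ha_0^\dagger T^\dagger = \ha^\dagger$, so the operator $\cosh r(\ha_0-\alpha)+e^{i\theta}\sinh r(\ha_0^\dagger-\overline\alpha)$ conjugates into $\cosh r(\hb^\dagger-\alpha)+e^{i\theta}\sinh r(\ha^\dagger-\overline\alpha)$. Identity \eqref{squeezeigen2alpha} is obtained the same way from \eqref{squeezeigen2}: conjugating $\ha_0 + z\ha_0^\dagger$ by $T$ yields $\ha + z\hb$, and by $(T^{-1})^\dagger$ yields $\hb^\dagger + z\ha^\dagger$, while the eigenvalue $\alpha$ is untouched. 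The bi-normalization \eqref{249} follows exactly as in Proposition~\ref{squuezeprop}: $\langle\tau_z^\alpha,\kappa_z^\alpha\rangle = \langle T\psi^\alpha_z, (T^{-1})^\dagger\psi^\alpha_z\rangle = \langle\psi^\alpha_z, T^{-1}(T^{-1})^{\dagger\,-1}\!\cdots$, more cleanly $\langle T\psi^\alpha_z,(T^{-1})^\dagger\psi^\alpha_z\rangle = \langle \psi^\alpha_z, T^\dagger (T^{-1})^\dagger\psi^\alpha_z\rangle = \langle\psi^\alpha_z,\psi^\alpha_z\rangle = 1$ by \eqref{csorto}, using $T^\dagger(T^{-1})^\dagger = (T^{-1}T)^\dagger = \1$.

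Finally, for the resolution of the identity \eqref{243}, I would insert the definitions into the inner products and move $T$, $(T^{-1})^\dagger$ across: for $f,g\in\Hil$,
\be
\langle f,\tau_z^\alpha\rangle\langle\kappa_z^\alpha,g\rangle = \langle f, T\psi^\alpha_z\rangle\langle (T^{-1})^\dagger\psi^\alpha_z, g\rangle = \langle T^\dagger f,\psi^\alpha_z\rangle\langle\psi^\alpha_z, T^{-1}g\rangle,
\en
so that $\frac1\pi\int_{\mathbb C}d\alpha\,\langle f,\tau_z^\alpha\rangle\langle\kappa_z^\alpha,g\rangle = \frac1\pi\int_{\mathbb C}d\alpha\,\langle T^\dagger f,\psi^\alpha_z\rangle\langle\psi^\alpha_z,T^{-1}g\rangle = \langle T^\dagger f, T^{-1}g\rangle = \langle f, g\rangle$, where the middle equality is \eqref{sidentiti} applied to the pair $T^\dagger f, T^{-1}g\in\Hil$ (valid for all vectors of $\Hil$, which is exactly why boundedness of $T$ and $T^{-1}$ matters here), and the last uses $(T^\dagger)^\dagger T^{-1} = TT^{-1} = \1$. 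The main obstacle — really the only delicate point — is justifying the interchange of the bounded operators with the $\alpha$-integral in \eqref{243}: one should argue that $\alpha\mapsto\psi^\alpha_z$ is (weakly, hence Bochner) measurable and that the integral \eqref{sidentiti} converges in the weak operator sense, so that pairing with the fixed bounded $T^\dagger f$ and $T^{-1}g$ commutes with $\int d\alpha$; since $T$, $T^{-1}$ are bounded this is routine, but it is the one place where the regularity hypothesis is genuinely used rather than decorative. Everything else is the conjugation calculus already rehearsed in Proposition~\ref{squuezeprop} and in the displacement-operator case of Section~\ref{subsub:BCS}.
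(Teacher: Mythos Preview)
Your proof is correct and follows essentially the same route as the paper's: every identity is obtained by conjugating the corresponding standard property \eqref{squeezeigen}--\eqref{sidentiti} of $\psi^\alpha_z$ through $T$ or $(T^{-1})^\dagger$, exactly as the authors do. Your extra caution about interchanging operators with the $\alpha$-integral in \eqref{243} is actually unnecessary, since once you have the pointwise (in $\alpha$) scalar identity $\langle f,\tau_z^\alpha\rangle\langle\kappa_z^\alpha,g\rangle = \langle T^\dagger f,\psi^\alpha_z\rangle\langle\psi^\alpha_z, T^{-1}g\rangle$ you are just applying \eqref{sidentiti} to the fixed pair $T^\dagger f, T^{-1}g\in\Hil$, with no operator--integral commutation involved.
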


\begin{proof}\label{proofBSQOP23}
To prove \eqref{247} we use \eqref{pseudo_standard} and the fact that $\tau^\alpha_{z}(x)\in\D$,
\begin{eqnarray}
 &&\left[\cosh r \left(\ha-\alpha \right)+\exp{(i\theta)}\sinh r\left(\hb-\overline{\alpha} \right) \right]\tau^\alpha_{z}(x)=\nonumber\\
  && T\left[\cosh r \left(\ha_0 -\alpha \right)+\exp{(i\theta)}\sinh r\left(\ha_0^\dagger -\overline{\alpha} \right) \right]T^{-1}\tau^\alpha_{z}(x)=\nonumber\\
 && T \left[\cosh r \left(\ha_0 -\alpha \right)+\exp{(i\theta)}\sinh r\left(\ha_0^\dagger-\overline{\alpha} \right) \right] \psi^\alpha_{z}(x)=0,\nonumber
\end{eqnarray}
by equation (\ref{squeezeigen}).
In the same way, but using the deformation given by $T^\dagger,(T^{-1})^\dagger$ we can  prove
 \eqref{248}

Formulas in \eqref{squeezeigen2alpha} can be proved as follows:
\beano
\left(\ha+\alpha \hb\right)\tau^\alpha_{z}(x)=\left(T\ha_0T^{-1}+\alpha T\ha_0^\dag T^{-1}\right)\tau^\alpha_{z}(x)=\\
T\left(\ha_0+\alpha \ha_0^\dag \right)T^{-1}\tau^\alpha_{z}(x)=T\left(\ha_0+\alpha \ha_0^\dag \right)\psi^\alpha_{z}(x)=\\
T\left(z\psi^\alpha_{z}(x)\right)=z\tau^\alpha_{z}(x).
\enano
The proof for $\kappa_z^\alpha(x)$ is analogous.

The bi-normalization condition \eqref{249} easily follows from \eqref{csorto}:
\be
\   \langle \tau_{z}^{\alpha}(x),\kappa_{z}^{\alpha}(x)\rangle=\langle T\psi^\alpha_{z}(x),(T^{-1})^{\dag}\psi^\alpha_{z}(x)\rangle=\langle T^{-1}T\psi^\alpha_{z}(x)\psi^\alpha_{z}(x)\rangle=1.
\en
To prove \eqref{243} we use the resolution of the identity \eqref{sidentiti}, valid $\forall z\in\mathbb{C}$, and for all $f,g\in\Hil$:
\begin{eqnarray}\label{CloRel}
&&\langle f,g\rangle=\langle f,T(T^{-1}) g\rangle=\langle T^{\dag}f,(T^{-1}) g\rangle=\nonumber \\
&&=\frac{1}{\pi}\int_{\mathbb{C}}d\alpha\langle T^{\dag}f, \psi^\alpha_{z}(x)\rangle\langle\psi^\alpha_{z}(x), (T^{-1}) g\rangle=\frac{1}{\pi}\int_{\mathbb{C}}d\alpha\langle f, T\psi^\alpha_{z}(x)\rangle\langle(T^{-1})^{\dag}\psi^\alpha_{z}(x),g\rangle=\nonumber \\&&=\frac{1}{\pi}\int_{\mathbb{C}}d\alpha\langle f, \tau_{z}^{\alpha}(x)\rangle\langle\kappa_{z}^{\alpha}(x),g\rangle
\end{eqnarray}

\end{proof}

Summarizing, we have shown that our vectors have  properties which are very similar to those of the ordinary squeezed states. The differences arise mainly as a consequence of the different contexts in which these states are considered (ordinary or PT quantum mechanics). 

\subsection{Some results on non regular bi-squeezed states}
\label{sec::ubounded}
In the previous sections we have eavily used the hypothesis that $T$ and $T^{-1}$ are bounded. This has produced, for instance, a series expression for the squeezing operators and for the related regular BSS, see Proposition \ref{propBSQOP1} and equations (\ref{series1}) and (\ref{series2}).
We can in general extend the definitions of the operators and of the squeezed states also to an unbounded  $T$ (or $T^{-1}$).
Of course, in this case there is no guarantee that the squeezing operators $\Sc(z)$ and $\mathcal T(z)$   are bounded, and in fact, in general, they are not. 

For this reason, rather than trying to apply formula \eqref{bcsT},   it is more natural to define bi-squeezed states   as in \eqref{series1}-\eqref{series2} through the series expansions containing the vectors of $\F_{\varphi},\F_{\Psi}$:
\begin{eqnarray}
\tau_z(x)=e^{\lambda(z)}\sum_{k\geq0}\frac{\lambda_b(z)^k}{k!}\sqrt{(2k)!}\varphi_{2k}(x),\quad \kappa_z(x)=e^{\lambda(z)}\sum_{k\geq0}\frac{\lambda_a(z)^k}{k!}\sqrt{(2k)!}\Psi_{2k}(x),\label{bisquizzi}
\end{eqnarray}
and check for convergence conditions for these series. This is exactly what we have done, for instance, for bi-coherent states in \cite{bialo2017}.
It is not a big surprise that convergence of the above series is not guaranteed  in all of $\mathbb{C}$, in this case.  In fact,  we can prove the following result, giving sufficient conditions for the series above to converge.

\begin{thm}\label{theo2}
 Consider a sequence of complex numbers $\alpha_n\neq0,\forall n\geq0,$ such that $\lim_{n\rightarrow\infty}\left|\frac{\alpha_{n+1}}{\alpha_n}\right|=\overline\alpha$.
Assume that four strictly positive constants $A_\varphi$, $A_\Psi$, $r_\varphi$ and $r_\Psi$ exist, together with two strictly positive sequences $M_n(\varphi)$ and $M_n(\Psi)$ for which
\be
\lim_{n\rightarrow\infty}\frac{M_n(\varphi)}{M_{n+2}(\varphi)}=M(\varphi), \qquad \lim_{n\rightarrow\infty}\frac{M_n(\Psi)}{M_{n+2}(\Psi)}=M(\Psi),
\label{30}\en
where $M(\varphi)$ and $M(\Psi)$ could be infinity, such that, for all $n\geq0$,
\be
\|\varphi_n\|\leq A_\varphi\,r_\varphi^n M_n(\varphi), \qquad \|\Psi_n\|\leq A_\Psi\,r_\Psi^n M_n(\Psi).
\label{31b}\en
Then, the following series:
\be
\sum_{n=0}^\infty\frac{\lambda_b(z)^n}{\alpha_n}\varphi_{2n}(x),\qquad \sum_{n=0}^\infty\frac{\lambda_a(z)^n}{\alpha_k}\Psi_{2n}(x),
\label{33}\en
where $\lambda_a(z)=\frac{1}{2}e^{-i\theta}\tanh r$ and $ \lambda_b(z)=-\overline{\lambda_a(z)}$, are all convergent $\forall z=re^{i\theta}\in C_\rho(0)$, where $C_\rho(0)$ is the circle centered in the origin of the complex plane and of radius $$\rho=\min\left[\tanh^{-1}\left(\frac{2\overline\alpha M(\varphi)}{r_\varphi^2}\right),\tanh^{-1}\left(\frac{2\overline\alpha M(\Psi)}{r_\Psi^2}\right)\right].$$

\end{thm}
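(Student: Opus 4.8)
The plan is to prove convergence of the two series in \eqref{33} by a direct estimate on the norms of the partial sums, using the bounds \eqref{31b} on $\|\varphi_n\|$ and $\|\Psi_n\|$ together with a root/ratio test on the resulting numerical series. Since the two series are handled symmetrically (one for $\F_\varphi$ with constants $A_\varphi, r_\varphi, M_n(\varphi)$, the other for $\F_\Psi$), it suffices to do the estimate once, say for $\sum_n \frac{\lambda_b(z)^n}{\alpha_n}\varphi_{2n}(x)$, and then observe that the argument for the $\Psi$-series is identical after replacing $\lambda_b$ by $\lambda_a$ (and $|\lambda_a(z)|=|\lambda_b(z)|=\frac12\tanh r$, so the modulus is the same). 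The minimum in the definition of $\rho$ then simply reflects that we need both series to converge, i.e. we intersect the two discs of convergence.

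First I would write, for fixed $z=re^{i\theta}$ with $r<\rho$, the tail of the $\varphi$-series and bound it in norm:
\be
\left\|\sum_{n=N}^{M}\frac{\lambda_b(z)^n}{\alpha_n}\varphi_{2n}(x)\right\|
\leq \sum_{n=N}^{M}\frac{|\lambda_b(z)|^n}{|\alpha_n|}\,\|\varphi_{2n}\|
\leq A_\varphi\sum_{n=N}^{M}\frac{|\lambda_b(z)|^n}{|\alpha_n|}\,r_\varphi^{2n}M_{2n}(\varphi),
\en
using \eqref{31b} with index $2n$. It then suffices to show that the numerical series $\sum_n b_n$ with $b_n=\frac{|\lambda_b(z)|^n}{|\alpha_n|}r_\varphi^{2n}M_{2n}(\varphi)$ converges; completeness of $\Hil$ then gives convergence of the vector series. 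I would apply the ratio test:
\be
\frac{b_{n+1}}{b_n}=|\lambda_b(z)|\,\frac{|\alpha_n|}{|\alpha_{n+1}|}\,r_\varphi^{2}\,\frac{M_{2n+2}(\varphi)}{M_{2n}(\varphi)}\longrightarrow |\lambda_b(z)|\cdot\frac{1}{\overline\alpha}\cdot r_\varphi^2\cdot\frac{1}{M(\varphi)}
\en
as $n\to\infty$, by hypothesis \eqref{30} (read along the even subsequence, which has the same limit $M(\varphi)$) and the assumption $\lim|\alpha_{n+1}/\alpha_n|=\overline\alpha$. Since $|\lambda_b(z)|=\frac12\tanh r$, this limit is $<1$ precisely when $\tanh r<\frac{2\overline\alpha M(\varphi)}{r_\varphi^2}$, i.e. $r<\tanh^{-1}\!\big(\frac{2\overline\alpha M(\varphi)}{r_\varphi^2}\big)$; if $M(\varphi)=\infty$ the ratio limit is $0$ and there is no restriction, consistent with $\tanh^{-1}(\infty)$ being read as $+\infty$ (all of $\mathbb C$). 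The same computation for the $\Psi$-series gives the threshold $\tanh^{-1}\!\big(\frac{2\overline\alpha M(\Psi)}{r_\Psi^2}\big)$, and requiring both yields $r<\rho$ with $\rho$ the stated minimum.

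I expect the only genuinely delicate points to be bookkeeping rather than conceptual. One is making sure the ratio test is applied to the correct subsequence: \eqref{30} is stated with $M_n/M_{n+2}$, and since our series involves $M_{2n}(\varphi)$, the relevant quotient is $M_{2n}(\varphi)/M_{2n+2}(\varphi)$, which is a subsequence of $M_n/M_{n+2}$ and hence tends to the same limit $M(\varphi)$ — this should be stated explicitly. A second is the degenerate cases: when $M(\varphi)$ or $M(\Psi)$ is $+\infty$ one must interpret $\tanh^{-1}$ of a quantity $\ge 1$ (or of $+\infty$) as giving no constraint, so that the corresponding disc is all of $\mathbb C$; I would add a remark clarifying that convention. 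Finally, one should note that the ratio test gives absolute convergence of the numerical majorant, hence (Hilbert space being complete) norm-convergence of the vector series, which is what the statement asserts; monotonicity/continuity of $\tanh^{-1}$ on $[0,1)$ is used implicitly to translate the bound on the ratio limit into the bound on $r$. None of these steps requires a hard argument, so the proof is essentially a careful application of the ratio test to a dominating scalar series.
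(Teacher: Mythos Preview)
Your argument is correct and follows essentially the same route as the paper: bound the vector series termwise by the scalar majorant $\sum_n \frac{|\tanh r|^n}{2^n|\alpha_n|}\,A_\varphi r_\varphi^{2n}M_{2n}(\varphi)$ (and analogously for $\Psi$), then apply the ratio test to that majorant to read off the radius. The paper's own proof is in fact briefer than yours, merely recording the two majorizing inequalities and stating that the result follows from ``a straightforward determination of the radii of convergence''; your added remarks on the even-index subsequence of $M_n/M_{n+2}$ and on the $M(\varphi)=\infty$ case are useful clarifications that the paper leaves implicit.
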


\begin{proof}
The proof relies upon the following estimates
\beano
 \sum_{n\geq0}\frac{|\tanh(r)|^n}{2^n\alpha_n}\lVert \varphi_{2n}\rVert\leq
 \sum_{n\geq0}\frac{|\tanh(r)|^n}{2^n\alpha_n}A_{\varphi}r_{\varphi}^{2n}M_{2n}(\varphi),\\
 \sum_{n\geq0}\frac{|\tanh(r)|^n}{2^n\alpha_n}\lVert \Psi_{2n}\rVert\leq
 \sum_{n\geq0}\frac{|\tanh(r)|^n}{2^n\alpha_n}A_{\Psi}r_{\Psi}^{2n}M_{2n}(\Psi),
\enano
and from a straightforward determination of the radii of convergence of the latter series.

\end{proof}
The above theorem can be used to estimate the convergence of the bi-squeezed states \eqref{bisquizzi}. For that we take $\alpha_n=\frac{n!}{\sqrt{(2n)!}}$ in
\eqref{33}. With this choice we find $\overline{\alpha}=\frac{1}{2}$, whereas the explicit values of $A_\varphi$, $A_\Psi$, $r_\varphi,r_\Psi$ have to be fixed according to the specific expression of  the states $\varphi_n,\Psi_n$, and of their norms, as the next example shows.

\subsubsection{A case study: the Swanson model}

The non-hermitian  Swanson model arises, in its 1D version,
from the non-hermitian Hamiltonian
$$
H_{\nu}=\frac{1}{2\cos(2\nu)}\left(\hat p_0^2e^{-2i\nu}+\hat q_0^2e^{2i\nu}\right),$$ %
where  $\hat q_0$ and $\hat p_0$ are the self-adjoint position and momentum operators, see Section \ref{mur}, and $\nu$ is a real parameter taking values in $I:=(-\frac{\pi}{4},\frac{\pi}{4})\backslash\{0\}$, see \cite{swan}\footnote{For $\nu=0$ we  recover the harmonic oscillator Hamiltonian}.
Introducing the pair of pseudo bosonic operators defined as
\be
\hat a=\frac{1}{\sqrt{2}}\left(\hat q_0 e^{i\nu}+i\hat p_0e^{-i\nu}\right),\quad \hat b=\frac{1}{\sqrt{2}}\left(\hat q_0 e^{i\nu}-i\hat p_0e^{-i\nu}\right),
\label{231},\en
see \cite{baginbagbook}, they satisfy
\be
[\hat a,\hat b]=\1,\quad \hat a^\dag\neq \hat b,
\label{24}\en
and moreover
$$H_\nu=\frac{1}{\cos(2\nu)}\left(\hat b \hat a+\frac{1}{2}\1\right).$$
As shown in \cite{baginbagbook}, $\D=\SC$, and the biorthonormal families $F_{\varphi}$ and $\F_\Psi$ are defined by the functions
\be
\varphi_n(x)=\frac{N_1}{\sqrt{2^nn!}}H_n(e^{i\nu}x)\textrm{exp}\left\{-\frac{1}{2}e^{2i\nu}x^2\right\},\quad\Psi_n(x)=\frac{N_2}{\sqrt{2^nn!}}H_n(e^{-i\nu}x)\textrm{exp}\left\{-\frac{1}{2}e^{-2i\nu}x^2\right\}
\en
for all $n\geq0$. Here, to guarantee that $\left<\varphi_{ 0},\Psi_0\right>=1$, we take $N_1\bar{N}_2=\frac{e^{-i\nu}}{\sqrt{\pi} }$. In the following we further fix $N_1=1$.

The bi-squeezed states $\tau_z(x),\kappa_z(x)$ in (\ref{bisquizzi}) turn out to be
\begin{eqnarray}
\tau_z(x)&=&e^{\lambda(z)}\sum_{k\geq0}\frac{\lambda_b(z)^k}{k!\sqrt{2^{2k}}}H_{2k}(e^{i\nu}x)\textrm{exp}\left\{-\frac{1}{2}e^{2i\nu}x^2\right\},\\ \kappa_z(x)&=&\frac{e^{i\nu}e^{\lambda(z)}}{\sqrt{\pi}}\sum_{k\geq0}\frac{\lambda_b(z)^k}{k!\sqrt{2^{2k}}}H_{2k}(e^{-i\nu}x)\textrm{exp}\left\{-\frac{1}{2}e^{-2i\nu}x^2\right\}.\label{bisquizzi_swa}
\end{eqnarray}

Using the equality 
\beano
\lVert\varphi_{n}\rVert^2= \sqrt{\frac{\pi}{\cos(2\nu)}}\mathcal{L}_n\left(\frac{1}{\cos(2\nu)}\right),
\enano
where $\mathcal{L}_n$ is Legendre polynomial of degree $n$, \cite{baginbagbook},
we obtain that
\bea
\nonumber\lVert\tau_z\rVert^2=\frac{\pi e^{|\lambda(z)|^2}}{\cos(2\nu)}\sum_{k\geq0}\frac{1}{2^{2k}}\frac{|\tanh(|z|)|^{2k}}{(k!)^2}(2k)!
\mathcal{L}_{2k}\left(\frac{1}{\cos(2\nu)}\right).
\ena
%
With the Laplace-Heine asymptotic formula, \cite{Sze}, pag. 194, Th. 8.21.1,
$$\mathcal{L}_n(x)\approx\frac{1}{\sqrt{2\pi n}}(x^2-1)^{-1/4}\left(x+(x^2-1)^{1/2}\right)^{n+1/2},$$
valid for $n\rightarrow \infty$ and $x\in\mathbb{R}\backslash[-1,1]$,
it is straightforward to prove that $\lVert\tau_z\rVert$ converges in the ball $C_{\rho_\nu}(0)$
with $\rho_\nu=\tanh^{-1}\left(\frac{1}{\cos(2\nu)}+\left(\frac{1}{\cos^2(2\nu)}-1\right)^{1/2}\right)^{-2}$. This follows from Theorem \ref{theo2}, with the following identifications:
\beano
A_{\varphi}&=&\left(\frac{\pi}{\cos(2\nu)}\right)^{1/4}\left(\left(\Cv\right)^2-1\right)^{-1/8}\left(\Cv+\left(\left(\Cv\right)^2-1\right)^{1/2}\right)^{1/4},\\
M_n(\phi)&=&\frac{1}{(2\pi n)^{1/4}},\\
r_{\varphi}&=&\left(\Cv+\left(\left(\Cv\right)^2-1\right)^{1/2}\right)^{1/2},
\enano
where $\alpha_n=\frac{n!}{\sqrt{(2n)!}}$ in   \eqref{33}.
Similar estimates can be repeated for $\kappa_z(x)$.
It follows that the radius of convergence of $\tau_z(x)$ and $\kappa_z(x)$ shrinks to zero for $\nu\rightarrow \pm \frac{\pi}{4}$,
while convergence in the whole complex plane is deduced for $\nu\rightarrow0$, that is when
the Hermiticity of $H_{\nu}$ is recovered. This is, in fact, not surprising: in this limit, in fact, we go back to the standard situation, where the BSS collapse into the single, standard and always well-defined, squeezed state.

It would be interesting to consider a possible extension of these results to other kind of generalized Swanson models, like the one discussed in \cite{postref}, when the mass depends on position. This is part of our future plans.

\section{Bi-squeezed states in a physical system}\label{sect::example}

In  \cite{cohen} the following hermitian Hamiltonian 
$$
H_0=\omega \hat a_0^\dagger \hat a_0+
i\Lambda\left((\hat a_0^\dagger)^2e^{-2i\omega t}-\ha_0^2 e^{2i\omega t}\right),
$$
is introduced, 
in connection with ordinary squeezed states. Here
 $\omega$ and $\Lambda$ are real parameters.  If we replace bosonic with pseudo-bosonic operators, $H_0$ is replaced by
\be
H=\omega  \hat b \hat a+i\Lambda\left({\hat b}^2e^{-2i\omega t}-\hat a^2e^{2i\omega t}\right).
\label{41}\en
Here $\hat a$ and $\hat b$ are any pair of operators satisfying Definition \ref{def21}, while $\omega$ and $\Lambda$ are as above. It is clear that $H$ is not hermitian. As for its physical meaning, let us consider the simple situation in which
	$$
	\hat a = \hat a_0 + \beta,\quad \quad \hat b = \hat a_0^{\dagger} + \gamma,
	$$
	with constants $\beta\ne\gamma$ both real and much smaller than  $\Lambda$, which is much smaller than $\omega$. Then $H$ can be approximated as
	$$H\simeq H_0+H_1$$
	$$H_0=\omega \hat a_0^\dagger \hat a_0+
	i\Lambda\left((\hat a_0^\dagger)^2e^{-2i\omega t}-\ha_0^2 e^{2i\omega t}\right),$$
	$$H_1=\gamma\omega \hat a_0+\beta\omega\hat a_0^\dagger, $$
	neglecting  terms which are quadratic in  $\beta$ and $\gamma$ or depend on $\Lambda\gamma$ and on $\Lambda\beta$. Following \cite{WallsMil}, this new Hamiltonian describes a specific problem in quantum optics: a parametric oscillator ($H_0$) in the presence of cavity losses ($H_1$). As we will see below,  these effects produce a dynamics which can be easily described in terms of pseudo-bosons.


If we introduce the {\em capital operators} $\hat A(t)=\hat a(t)\,e^{i\omega t}$ and $\hat B(t)=\hat b(t)\,e^{-i\omega t}$, and their linear combinations 
$$
\hat X_+(t)=\frac{1}{2}\left(\hat A(t)+\hat B(t)\right), \qquad \hat X_-(t)=\frac{1}{2i}\left(\hat A(t)-\hat B(t)\right),
$$
the Heisenberg equations of motion for these latter can be easily solved, and we find that $\hat X_\pm(t)=\hat X_\pm(0)e^{\pm2\Lambda t}$. Here $\hat X_+(0)=\frac{\hat a+\hat b}{2}$ and $\hat X_-(0)=\frac{\hat a-\hat b}{2i}$. Therefore,
$$
\hat A(t)=\hat a\cosh(2\Lambda t)+\hat b \sinh(2\Lambda t), \qquad \hat B(t)=\hat b\cosh(2\Lambda t)+\hat a \sinh(2\Lambda t).
$$
If we consider the time-dependent number operator $\hat N(t)=\hat B(t)\hat A(t)$, its mean value on $\varphi_0$ turns out to be
$$
\left<\varphi_0, \hat N(t)\varphi_0\right>=\|\varphi_0\|^2\sinh^2(2\Lambda t)+ \left<\varphi_0, \hat b^2\varphi_0
\right>\sinh(2\Lambda t)\cosh(2\Lambda t).
$$
Notice that the second term is not zero, in general, since the matrix element is proportional to  $ \left<\varphi_0, \varphi_2\right>$, and the vectors in $\F_{\varphi}$ are not orthogonal, neither normalized. However, if we replace the mean value of $\hat N(t)$ with the matrix element $\left<\Psi_0, \hat N(t)\varphi_0\right>$, we obtain 
$$
\left<\Psi_0, \hat N(t)\varphi_0\right>=\sinh^2(2\Lambda t),
$$
which is the same result we would get when going back from pseudo-bosons to ordinary bosons, \cite{cohen}. In the same way, to compute the mean value of $\hat N^\dagger(t)$, rather than considering $\left<\Psi_0, \hat N^\dagger(t)\Psi_0\right>$, it is convenient to compute  $\left<\varphi_0, \hat N^\dagger(t)\Psi_0\right>$, which again returns $\sinh^2(2\Lambda t)$, since $\left<\varphi_0, \hat N^\dagger(t)\Psi_0\right>=\overline{\left<\Psi_0, \hat N(t)\Psi_0\right>}$. Hence the following natural questions arises: which kind of matrix elements does really make sense, here? And why? A similar question was discussed in \cite{bagdin1} and \cite{bagdin2} from the point of view of the dynamics of the system, and the analysis was linked to the presence of the same operator $S_\Psi$ which turns out to be useful also here.   In fact, introducing the positive operator $S_\Psi=\sum_n|\Psi_n><\Psi_n|$, see \cite{baginbagbook}, biorthogonality of $\F_{\varphi}$ and $\F_{\Psi}$ implies that $S_\Psi\varphi_n=\Psi_n$. We refer to \cite{baginbagbook} for several mathematical aspects of $S_\Psi$ and of its inverse, including the convergence of the series which define these operators. Here we only want to observe that, for any operator $\hat Q$ on $\Hil$,
$$
\left<\Psi_0, \hat Q\varphi_0\right>=\left<S_\Psi\varphi_0, \hat Q\varphi_0\right>=\left<\varphi_0, \hat Q\varphi_0\right>_{S_\Psi},
$$
where we have introduced the new scalar product $\left<\cdot, \cdot\right>_{S_\Psi}=\left<\cdot, S_\Psi\cdot\right>$, analogously to what we have done at the end of Section \ref{sec::prel},  where the role of $S_\Psi$ was played by $\eta$. Hence it is easy to understand what we are doing when computing matrix elements as the one in $\left<\Psi_0, \hat Q\varphi_0\right> $: we are only computing the mean value of $\hat Q$ on $\varphi_0$, but with respect to a different scalar product. This is not surprising. On the contrary, it is in fact a typical aspect of PT-quantum mechanical systems, \cite{BB98}-\cite{baginbagbook}.

Following \cite{cohen}, we can rewrite the non-hermitian version of the electric field
$$
E(x,t)=i\left(\hat a(t)e^{ikx}-\hat b(t)e^{-ikx}\right),
$$
as follows:
$$
E(x,t)=-2\left(\hat X_+(0)e^{2\Lambda t}\sin(kx-\omega t)+\hat X_-(0)e^{-2\Lambda t}\cos(kx-\omega t)\right),
$$
which shows that one component of $E(x,t)$ is amplified, and the other is damped. If we compute the matrix elements for $\hat X_\pm(t)$ and their squares, and we introduce a sort of deformed variance for the operator $\hat G$ as 
$$
(\delta \hat G)^2=\left<\Psi_0,\hat G^2\varphi_0\right>-\left<\Psi_0, \hat G\varphi_0\right>^2,
$$
we easily find that $\delta X_+(t)\delta X_-(t)=\frac{1}{2}$, for all $t$. Then, in view of our previous comment, this suggests that the {\em good} scalar product to adopt, at least if we are interested in saturating the Heisenberg inequality, is the $\left<., .\right>_{S_\Psi}$ one. Incidentally we observe that $\delta \hat G$ is obtained similarly to $(\Delta_{\eta}\hat G)_\chi$ in  Section \ref{mur}, and in this perspective the comments given there still hold here. What is interesting for us is that the exponential of $H$, when $\omega=0$, can be identified with the operator $\Sc(z)$, with $z=-2\Lambda$. Of course, the exponential of $H^\dagger$ is nothing but $\mathcal{T}(z)$, with the same identification. Then we conclude that our generalized squeezing operators can be related to some quadratic Hamiltonian deduced easily with a simple deformation of bosonic operators, which must be replaced with their pseudo-bosonic counterparts.

\section{Conclusions and possible developments}\label{sect::concl}
In this paper we have introduced a new class of states, the BSS, and we have deduced some of their properties. We have found three equivalent definitions for the regular BSS, while the non regular ones are conveniently defined in form of series for which a rather mild condition of convergence has been proposed. { Some examples  of BSS in concrete  physical models described by non hermitian Hamiltonians have been discussed and analysed.
We have also shown that BSS, when considered together with suitable metric operators, are able to saturate the Heisenberg uncertainty inequality.}

We plan to analyze in more details the role of these vectors in the context of PT-quantum mechanics, and to look for more properties and for more applications. The dynamics of these states obviously also deserve attention. An interesting question, for instance, is: does the time evolution maps BSS into (possibly different) BSS? Another intriguing aspect is whether it is possible to construct some experimental settings in which they can be observed. These are some of the aspects which we plan to consider next in our analysis.

\section*{Acknowledgements}
The authors acknowledge partial support from Palermo University. F.B. and F.G. also acknowledge partial support from G.N.F.M. of the I.N.d.A.M.

\section*{Computational solution}

This paper does not contain any computational solution.

\section*{Ethics statement}

This work did not involve any active collection of human data.

\section*{Data accessibility statement}

This work does not have any experimental data.

\section*{Competing interests statement}

We have no competing interests.

\section*{Authors' contributions}

All authors equally contributed to the preparation of the paper, and gave final approval for publication.

\section*{Funding}

This work was partly supported by G.N.F.M. and G.N.A.M.P.A.-INdAM and by the University of Palermo.

%
%
%
%
%
%


\vskip2pc


\begin{thebibliography}{9}

	
\bibitem{effH1}	G. L. Celardo and L. Kaplan, {\em Superradiance transition in one-dimensional nanostructures: An effective non-Hermitian Hamiltonian formalism}, Phys. Rev. B, {\bf 79}, 155108 (2009)


\bibitem{effH2} G. G. Giusteri,  F. Mattiotti  and G. L. Celardo,  {\em Non-Hermitian Hamiltonian approach to quantum transport in disordered networks with sinks: validity and effectiveness}, Phys. Rev. B, {\bf 91}, 094301 (2015)

	
	
\bibitem{effH3} D. V. Savin, V. V. Sokolov and H.-J. Sommers, {\em Is the concept of the non-Hermitian effective Hamiltonian relevant in the case of potential scattering?}, Phys. Rev. E, {\bf 67}, 026215, (2003)
	
	
\bibitem{BB98}C. M. Bender and S. Boettcher, \emph{Real Spectra of Non-Hermitian Hamiltonian Having ${\mathcal PT}$ Symmetry}, Phys. Rev. Lett. \textbf{80}, 5243 (1998).
	
	
\bibitem{dapro} J. da Provid$\hat e$ncia, N. Bebiano, J.P. da Provid$\hat e$ncia, {\em Non hermitian operators with real spectrum in quantum mechanics}, ELA, {\bf 21}, 98-109 (2010)
	
	
\bibitem {MosPra09} A. Mostafazadeh, \emph{Non-Hermitian Hamiltonians with a real spectrum and their physical applications}, Pramana-J. Phys. \textbf{73}, 269-277 (2009).
	
\bibitem{BGV15} F. Bagarello, F. Gargano, D. Volpe {\em $\mathcal{D}$-Deformed Harmonic Oscillator}, Int. J. Theor. Phys., {\bf 54}(11),4110-4123 (2015)
	
	
\bibitem{bagdin1} F. Bagarello, {\em Some results on the dynamics and  transition probabilities for non self-adjoint hamiltonians},  Ann. of Phys., {\bf 356}, 171-184 (2015)



\bibitem{bagdin2} F. Bagarello, {\em Transition probabilities for non self-adjoint Hamiltonians in infinite dimensional Hilbert spaces},
Ann. of Phys., {\bf 362}, 424-435 (2015)


\bibitem{bagdin3} F. Bagarello, {\em Non self-adjoint Hamiltonians with complex eigenvalues},  J. Phys. A, {\bf 49}, 215304 (2016)

	
\bibitem{bagproc} F. Bagarello, R. Passante, C. Trapani, {\em Non-Hermitian Hamiltonians in Quantum Physics;
	Selected Contributions from the 15th International Conference on Non-Hermitian
	Hamiltonians in Quantum Physics, Palermo, Italy, 18-23 May 2015}, Springer (2016)
	
	
\bibitem{baginbagbook} F. Bagarello, {\em Deformed canonical (anti-)commutation relations and non hermitian Hamiltonians}, in {Non-selfadjoint operators in quantum physics: Mathematical aspects}, F. Bagarello, J. P. Gazeau, F. H. Szafraniec and M. Znojil Eds., John Wiley and Sons Eds, Hoboken, New Jersey (2015)

\bibitem{BNLPbs} F. Bagarello, {\em Non linear pseudo-bosons}, J. Math. Phys.,  {\bf 52}, 063521, (2011)

	
\bibitem{BZn1} F. Bagarello, M. Znojil, Non Linear pseudo-bosons versus hidden Hermiticity, J. Phys. A, {\bf 44}, 415305 (2011)

\bibitem{BZn2} F. Bagarello, M. Znojil, Non Linear pseudo-bosons versus hidden Hermiticity. II: the case of unbounded operator, J. Phys. A, {\bf 45},115311 (2012)

\bibitem{tri} D.A. Trifonov, {Pseudo-boson coherent and Fock states}, arXiv: quant-ph/0902.3744, Proceedings of the 9th International Workshop on Complex Structures, Integrability and Vector Fields, Sofia, August 2008, 241-250


\bibitem{bagpb1} F. Bagarello, {\em Pseudo-bosons, Riesz bases and coherent states}, J. Math. Phys., {\bf 50}, 023531 (2010) (10pg)

	
\bibitem{BagQUO} F. Bagarello, 
{\em Deformed quons and bi-coherent states}
Proceedings of the Royal Society A: Mathematical, Physical and Engineering Sciences, 473(2200):20170049 (2017).

\bibitem{bialo2017} F. Bagarello, F. Gargano, S. Spagnolo, {\em Two-dimensional non commutative  Swanson model and its bicoherent states},  Proceedings of the WGMP Conferences,
2017, Bialowieza,  Polonia, in press
	
\bibitem{gazeaubook} J.P. Gazeau, {\em Coherent States in Quantum Physics},   WILEY-VCH verlag GmbH and Co. KGaA, Weinheim, (2009)
	
	
\bibitem{didier} M. Combescure,  R. Didier, {\em Coherent States and Applications in Mathematical Physics},   Springer, (2012)
	
\bibitem{aag}S. T. Ali, J. P. Antoine and J.P. Gazeau,	{\em Coherent States, Wavelets, and Their Generalizations}, Springer 2014
	
	

\bibitem{BarRad}  S.M. Barnett, P.M. Radmore, {\em Methods in Theoretical Quantum Optics}, Clarendon Press-Oxford Science Publications, 2002


\bibitem{ScullyZub} M.O. Scully, M.S. Zubairy, {\em Quantum Optics}, Cambridge University Press, 2002

\bibitem{WallsMil} D.F. Walls, G. J. Milburn, {\em Quantum Optics}, Springer, 2008

\bibitem{tri2} O. Cherbal, M. Drir, M. Maamache , D. A. Trifonov, {\em Fermionic coherent states for pseudo-Hermitian two-level systems},
J. Phys. A, {\bf 40}, 1835-1844, (2007)

\bibitem{maleki} Y. Maleki, {\em Para-Grassmannian Coherent and Squeezed States for Pseudo-Hermitian $q$-Oscillator and their Entanglement},  
SIGMA {\bf 7}, 084, 2011

\bibitem{BGST2017} F. Bagarello, F. Gargano, S. Spagnolo, S.Triolo, {\em Coordinate representation for non-Hermitian position and momentum operators},  Proceedings of the Royal Society A, \textbf{473} (2205), 20170434, 2017


\bibitem{Nori2011} C. M. Wilson, G. Johansson, A. Pourkabirian, M. Simoen, J. R. Johansson,	T. Duty, F. Nori	and P. Delsing, {\em Observation of the dynamical Casimir effect in a superconducting circuit}, Nature \textbf{479}, 376-379 (2011)



\bibitem{intop} Kuru S., Tegmen A., Vercin A., {\em Intertwined isospectral potentials in an arbitrary dimension},
J. Math. Phys, {\bf 42}, No. 8, 3344-3360, (2001); Kuru S.,
Demircioglu B., Onder M., Vercin A., {\em Two families of
	superintegrable and isospectral potentials in two dimensions}, J.
Math. Phys, {\bf 43}, No. 5, 2133-2150, (2002); Samani K. A., Zarei
M., {\em Intertwined hamiltonians in two-dimensional curved spaces},
Ann. of Phys., {\bf 316}, 466-482, (2005); N. Aizawa, V. K. Dobrev, {\em Intertwining Operator Realization of Non-Relativistic Holography}, Nucl. Phys. B {\bf 828}, 581-593 (2010); B. Midya, B. Roy, R. Roychoudhury, {\em Position Dependent Mass Schroedinger Equation and Isospectral Potentials : Intertwining Operator approach}, J.  Math. Phys., {\bf 51}, 022109 (2010); A. L. Lisok, A. V. Shapovalov, A. Yu. Trifonov, {\em Symmetry and Intertwining Operators for the Nonlocal Gross-Pitaevskii Equation}, SIGMA {\bf 9}, 066, 21 pages (2013)

\bibitem{maa} O. Cherbal, M. Maamache,{\em
Time-dependent pseudofermionic systems and coherent states},
J. Math. Phys  \textbf{57}, 022102, (2016)

\bibitem{swan} M. S. Swanson, {\em Transition elements for a non-Hermitian quadratic Hamiltonian},
J. Math. Phys., {\bf 45}, 585, (2004)


\bibitem{Sze} G.Szego, {\em Orthogonal polynomials}, American Mathematical Society, Colloquium Publications, Vol. 23, 4th
ed., Amer. Math. Soc., Providence, R.I., 1975.

\bibitem{postref}  B. Midya, P P Dube, R. Roychoudhury, {\em Non-isospectrality of the generalized Swanson Hamiltonian and harmonic oscillator}, J. Phys. A: Math. Theor., {\bf 44}, 062001 (2012)



%
%
%
%
%


\bibitem{cohen} C. Cohen-Tannoudji, J. Dupont-Roc, G. Grynberg, {\em Photons and Atoms, Introduction to Quantum Electrodynamics}, John Wiley and Sons, 1989


\end{thebibliography}
\end{document}